\def\A{\mathcal A}
\def \Hil {\mathcal{H}}
\def \F {\mathcal{F}}
\def \I {\mathcal{I}}
\def \D {\mathcal{D}}
\def  \S{\mathcal{S}}
\def  \A{\mathcal{A}}
\def \R {\mathbb{R}}
\def \C {\mathbb{C}}
\def\u{\mathfrak{u}}
\def \Tr {\mathrm{Tr}}
\def \Hil {\mathcal{H}}
\def \F {\mathcal{F}}
\def \I {\mathcal{I}}
\def \D {\mathcal{D}}
\def  \S{\mathcal{S}}
\def \R {\mathbb{R}}
\def \C {\mathbb{C}}
\def\u{\mathfrak{u}}
\def \Tr {\mathrm{Tr}}
\newcommand{\pd}[1]{\frac{\partial }{\partial #1}}
\newcommand{\pdc}[2]{\frac{\partial #1}{\partial #2}}
\def\pdd#1#2{\frac{\partial^2 #1}{\partial #2^2}}
\newcommand{\scalar}[1]{\langle #1 \rangle }
\newtheorem{proposition}{Proposition}
\newtheorem{lemma}{Lemma}
\newtheorem{example}{Example}
\newtheorem{remark}{Remark}
\begin{document}

%
%
\title{Reduction Procedures in Classical and Quantum Mechanics}

\author{Jos\'e F. Cari\~nena}
\address{Departamento de F\'\i sica Te\'orica \\ Universidad de Zaragoza \\ Pedro
  Cerbuna 12 \\ 50009 Zaragoza (SPAIN)}
\email{jfc@unizar.es}
\author{Jes\'us Clemente-Gallardo}
\address{Instituto de Biocomputaci\'on y F\'\i sica de los Sistemas Complejos \\
  Universidad de Zaragoza \\ Corona de Arag\'on 42 \\ 50009 Zaragoza (SPAIN)}
\email{jcg@unizar.es}
\author{Giuseppe Marmo}
\address{Dipartamento di F\'\i sica Teorica \\ Universit\'a Federico II and INFN
  sezione di Napoli
  \\ Via Cintia I\\ 81526 Napoli (ITALY)}
\email{marmo@na.infn.it}

\begin{abstract}
 We present, in a pedagogical style, many instances of reduction procedures
 appearing in a variety of physical situations, both classical and quantum. We
 concentrate on the essential aspects of any reduction procedure, both in the
 algebraic and geometrical setting, elucidating the analogies and the
 differences between the classical and the quantum situations.
\end{abstract}

\maketitle
{\bf Keywords:} Generalized reduction procedure, symplectic reduction Poisson
  reduction, Quantum systems

\section{Introduction}

Reduction procedures \footnote{Expanded version of the Invited review talk
  delivered by G. Marmo at XXIst International 
  Workshop  On Differential Geometric Methods In Theoretical Mechanics, Madrid
  (Spain), September 2006 }, the way we understand them today (i.e in terms of Poisson
reduction) can be traced back to Sophus Lie in terms of function groups,
reciprocal function groups and indicial functions
\cite{Forsyth:1959,LieSche:1893,MSSV:1985}. Function groups provide an algebraic 
description of the cotangent bundle of a Lie group but are slightly more
general because can arise from Lie group actions which do not admit a momentum
map \cite{Mar:1983}. Recently they have reappeared as ``dual pairs''
\cite{How:1985}.

Physicists have used reduction procedures as an aid in trying to integrate the
dynamics ``by quadratures''. Dealing, as usual, with a coordinate formulation,
reduction and coordinate separability  have overlapped a good deal. From the
point of view of the integration of the equations of motion, the so called
decomposition into independent motions may be formalized as follows. Consider a
dynamical vector field $\Gamma$ on a carrier manifold $M$ and a decomposition 
$$
\Gamma=\sum_i\Gamma_i,
$$
with the requirement that:
\begin{itemize}
\item $[\Gamma_i, \Gamma_j]=0$
\item $\mathrm{span\,}\{ \Gamma_i(p)\}=T_pM\,,\quad \forall p\in X\subset M$, where $X$
  is open and dense submanifold in $M$.
\end{itemize}

When such a decomposition exists, the evolution is given by the product of the
one parameter groups associated with each $\Gamma_j$. 

Looking for superposition rules which would generalize the usual
superposition rule of linear systems, Lie \cite{LieSche:1893} introduced dynamical systems
admitting a decomposition 
$$
\Gamma=\sum_ia^j(t)\Gamma_j 
$$
with $[\Gamma_i, \Gamma_j]={\displaystyle{\sum_k}}c_{ij}^k\Gamma_k$ and $c_{ij}^k\in \mathbb{R}$
(i.e. the vector fields $\Gamma_k$ span a finite dimensional Lie algebra) and still
 $\{ \mathrm{span\,} \Gamma_i(p)\}=T_pM\,,
\ \forall p\in X\subset M$, where $X$ is open and dense in $M$. The
integration of these systems may be achieved by finding a fundamental set of
solutions: they admit a superposition rule even if the system is nonlinear.
These systems have been called \textbf{Lie Scheffers systems} (see e.g.
\cite{CarGrabMar:2000} and references therein)
and have an important representative given by the Riccati equation. It is
worth illustrating this example because it is an instance of a nonlinear
equation which is obtained as reduction of a linear one.

\begin{example}
  Let us consider $\R^2$ and the following system of first-order differential equation
  \begin{equation}
    \label{eq:r2}
    \frac d{dt}
\left ( \begin{array}{c} x_1 \\ x_2\end{array}\right )
=\left ( \begin{array}{cc} a_{11}&a_{12}\\a_{21}&a_{22}\end{array}\right )
\left ( \begin{array}{c} x_1 \\ x_2\end{array}\right )=
A
\left ( \begin{array}{c} x_1 \\ x_2\end{array}\right )
  \end{equation}
where $A$ is a $2\times 2$ matrix with real entries, maybe depending on
time. By performing a reduction with respect to the dilation group, or its
infinitesimal generator $\Delta=x_1\partial_{x_1}+x_2\partial_{x_2}$, i.e. by introducing
the variable $\xi=x_1/x_2$, the linear equation (\ref{eq:r2}) becomes:
\begin{equation}
  \label{eq:ricatti}
  \dot \xi=b_0+b_1\,\xi+b_2\xi^2,
\end{equation}
with 
$$b_0=a_{12}\,, \quad b_1=a_{11}-a_{22}\,,\quad b_2=-a_{21}\,.
$$

This is an instance of Riccati equation and is associated with a ``free''
motion on the group $\mathrm{SL}(2,\R)$: $\dot{g}\,g^{-1}=-b^{j}(t)A_{j}$, where
$\{A_j\}$ is a basis of the Lie algebra of $\mathrm{SL}(2,\R)$. Associated to it
we find a nonlinear superposition rule for the solutions: if $x_!,
x_2,x_3$ are independent solutions, every other solution $x$ is obtained from the
following ratio:
$$
\frac{(x-x_1)(x_2-x_3)}{(x-x_2)(x_1-x_3)}=K
$$
\end{example}

Riccati type equations arise also in the reduction of the Schr\"odinger
equation from the Hilbert space of states to the space of pure states
\cite{ChaErcMarMukSi:2007}.  

Another example, but for partial differential equations, is provided by the
following variant of the Burgers equation.

\begin{example}
To illustrate the procedure for partial differential equations in one space
and one time, we consider the following variant of the Burgers equation
$$\frac{\partial w}{\partial t}+ \frac{1}{2}\left( \frac{\partial w}{%
\partial x}\right) ^{2}-\frac{k}{2}\left( \frac{\partial ^{2}w}{\partial x^{2}%
}\right) =\allowbreak 0\,.$$

This equation admits a superposition rule of the following kind: for any two
solutions,  $w_{1}$ and $w_{2}$,
$$
w =-k\log \left( \exp \left( -\frac{w_{1}+ \ell _{1}}{k}\right) + \exp
\left( -\frac{w_{2}+ \ell _{2}}{k}\right) \right) 
$$
is again a solution with $\ell _{1}$ and $\ell _{2}$ and $k$ real constants.

The existence of a superposition rule might suggest that the equation may
be related to a linear one. This is indeed the case and we find that the heat
equation 

$$
\frac{\partial u}{\partial t}=\frac{k}{2}\frac{\partial ^{2}u}{\partial
x^{2}}
$$
is indeed related to the nonlinear equation by the replacement $u =\exp\left(
-\frac{w}{k}\right)$.
\end{example}
 
 Out of this experience one may consider the possibility of integrating more
 general evolution systems of differential equations by looking for a
 simpler system (simple here meaning that it is a system explicitly
 integrable) whose reduction  
gives the system that we would like to integrate. In some sense, with a
sentence, we could say that the reduction procedure provides us with
interacting systems out of free (or Harmonic) ones.

The great interest for new completely integrable systems boosted the
research in this direction in the past twenty five years and many
interesting physical systems, both in finite and infinite dimensions were
shown to arise in this way \cite{OlPere:1981}.

In the same ideology one may also put the attempts for the unification of
all the fundamental interactions in Nature by means of Kaluza-Klein
theories. In addition the attempt to quantize theories described by
degenerate Lagrangians called for a detailed analysis of reduction
procedures connected with constraints. These techniques came up again when
considering geometric quantization as a procedure to construct unitary
irreducible representations for Lie groups by means of the orbit method
\cite{Kir:1999}.

The simplest example to show how ``nonlinearities'' arise from reduction of
a free system is the three dimensional free particle. Of course if our
concern is primarily with the equations of motion we have to distinguish the
various available descriptions: Newtonian, Lagrangian, Hamiltonian. Each
description carries additional structures with respect to the equations of
motion and one has to decide whether the reduction should be performed within the
chosen category or if the reduced dynamics will be allowed to belong to
another one.

The present paper is a substantially revised version of a talk delivered at a
workshop. We have decided to keep the colloquial and friendly style aimed at
exhibiting the many instances of  reduction procedures appearing in a variety
of physical situations, both classical and quantum. 
This choice may give the impression of an episodic paper, however it contains
an illustration of the essential aspects of any reduction procedure ,both in
the algebraic and geometrical setting, pointing out the analogies and the
differences between  the classical and the quantum situation. Moreover it 
shows a basic philosophical principle: The unmanifest  world is simple and
linear, it is the manifest world which is ``folded'' and nonlinear. 

\subsection{Interacting systems from free ones}
In what follows, we are going to consider few examples where ``nonlinearities''
obtained from reduction of linear systems are carefully examined. 
\begin{example}
On $\R^{3}$ we consider the equations of motion of a free particle of unit mass in
Newtonian form: 
\begin{equation}
\ddot{\vec{r}}=0\,.  \label{eq:free}
\end{equation}

This system is associated to the second order vector field in $T{\R}^3$,  $\Gamma =\dot{%
\vec{r}}\frac{\partial }{\partial \vec{r}}$ and has constants of the motion 
$$
\frac{d}{dt}(\vec{r}\land \dot{\vec{r}})=0\,,\quad \frac{d}{dt}\dot{\vec{r}}=0\,.
$$

By introducing spherical polar coordinates 
$$
\vec{r}=r\,\vec{n}\qquad \vec{n}\cdot\vec{n}=1,\ r>0
$$
where $\vec n = \vec r /\| \vec r \|=\vec r /r$ is the unit vector in the
direction
 of ${\vec r}$,
and taking derivatives we find
$$\dot{\vec r} = \dot r\,\vec n + r\,\dot{\vec n}\,, \qquad \ddot{\vec r}
= \ddot r\,\vec n + 2\,\dot r \,\dot{\vec n} + r \,\ddot{\vec n}\ .$$

Moreover, from  the identities
$$
\vec n \cdot {\vec n} = 1\,, \qquad \vec n \cdot
\dot{\vec n} = 0, \qquad {\dot{\vec n}}^2 = -
\vec n \cdot {\ddot{\vec n}}\ ,
$$
we see that 
 $\dot{\vec r}\cdot \vec n=\dot r$, and  $ {\vec r}\cdot \dot{\vec r}=r\,\dot r$;
 using
$\ddot {\vec r}=0$ we obtain
\begin{equation}\label{radial_3}   \ddot r = - r\,\vec n \cdot\ddot{\vec n} = r\
\dot{\vec n}^2\, ,
\end{equation}
and, of course, 
$$ 
\vec{r}%
\land \dot{\vec{r}}=r^{2}\vec{n}\land \dot{\vec{n}}\, .
$$

The equations of motion (\ref{radial_3}) are not equations in the
variable $r$ only, because of the term ${\dot{\vec n}}^2$.
However by making use of constants of the motion, we can choose
invariant submanifolds $\Sigma$ for $\Gamma$ such that taking the restrictions on such submanifolds,
 we can associate with this equation an equation of motion
involving only $r, \dot r$ and some ``coupling constants'' related 
to the values of the constants of motion. So, we can
 restrict ourselves to initial
conditions with a fixed value of the angular momentum, say, for instance, 
$$
l^{2}=r^{4}(\dot{\vec{n}})^{2},
$$
in order to get 
$$
\ddot{r}=\frac{l^{2}}{r^{3}}.
$$

If, on the other hand, we restrict ourselves to initial conditions
satisfying 
$$
(\dot{\vec{r}})^{2}=2E,
$$
we get 
$$
\ddot{r}=\frac{2E}{r}-\frac{\dot{r}^{2}}{r}.
$$

By selecting an invariant submanifold of $\R^{3}$ by means of a convex
combination of energy and angular momentum, i.e. $\alpha ({\vec r} \land \dot {\vec r} )^2 +
(1-\alpha ) \,\dot{\vec r}^2  = k$,
we would find 
$$
\ddot{r}=\left ( \frac{\alpha l^2+(1-\alpha)(2E-\dot r^2)r^2}{r^3} \right )
$$
\end{example}

We might even select a time dependent constant of the motion,
for instance
$$  r^2 + \dot{\vec r}^2\, t^2 - 2\, {\vec r} \cdot \dot{\vec r}\ t = k^2\ ,
$$
 to get rid of $(\dot{\vec{n}})^{2}$,
$$
(\dot{\vec{n}})^{2}= \frac{1}{r^2}\  [(k^2 + 2\,{\>r}
\cdot \dot{\>r}\ t - r^2)\,t^{-2} - \dot r^2 ]\ 
$$
 and thus we would get a time-dependent reduced dynamics:
$$ \ddot r = \frac{k^2}{r}\, t^{-2} + 2\,\dot r\, t^{-1} -
\frac{t^{-2}}{r} - \frac{\dot r^2}{r}\ .$$

The geometrical interpretation of what we have done is rather simple: we
have selected an invariant submanifold $\Sigma \subset \R^{3}$ (the level
set of a constant of the motion), we have restricted the dynamics to it, and
then we have used the rotation group to foliate $\Sigma $ into orbits. The
reduced dynamics is a vector field acting on the space of orbits $\widetilde{%
\Sigma}=\Sigma /SO(3)$. It should be remarked that even if $\Sigma $ is
selected in various ways, the choice we have made is compatible with the
action of the rotation group. It should be clear now that our presentation
goes beyond the standard reduction in terms of the momentum map, which involves
additional structures. Indeed this
reduction, when carried out with the canonical symplectic structure,  would
give us only the first solution in the example above. 

There is another way to undertake the reduction. On $T^*\R^3$ with coordinates 
$(\vec r, \vec p)$, we can consider the functions 
$$
\xi_1=\frac 12 \langle \vec r, \vec r\rangle\,, \qquad \xi_2=\langle \vec p,
\vec p\rangle\,, \qquad \xi_3=\langle \vec r, \vec p\rangle\,.
$$
Here $\langle \vec a, \vec b\rangle$ denotes the scalar product $\vec a\cdot
\vec b$, but it can be extended to a non definite positive scalar product.

The equation of motion (\ref{eq:free}) on these coordinate functions becomes 
$$
\frac d{dt}\xi_1=\xi_3\,, \qquad \frac d{dt}\xi_2=0\,, \qquad \frac
d{dt}\xi_3=\xi_2. 
$$
Note that any constant of the motion of this system is then a function of $\xi_2$ and
$(2\xi_1\xi_2-\xi_3^2)$. Consider first the invariant submanifold $\xi_2=k\in
\R$. Then we find, 

$$
\frac{d}{dt}\xi _{1}=\xi _{3}\,,\qquad \frac{d}{dt}\xi _{3}=k\,,
$$
i.e. a uniformly accelerated motion in the variable $\xi_1$. It may be described
by the Lagrangian $L= 
\frac{1}{2}v^{2}+ kx$, where $x=\xi_1$, $v=\dot \xi_1=\xi_3$.

Had we selected a different  invariant submanifold, for instance, 
$$
2\xi _{1}\xi _{2}-\xi _{3}^{2}=l^{2},
$$
the restricted dynamics would have been: 
$$
\frac{d}{dt}\xi _{1}=\xi _{3}\,,\qquad \frac{d}{dt}\xi _{3}=\frac{\xi
_{3}^{2}+l^{2}}{2\xi _{1}}\,.
$$

A corresponding Lagrangian description is provided by the function 
 $L=\frac{1}{2}\frac{v^{2}}{%
x}-\frac{2l^{2}}{x}$, with $x=\xi_1$ and $\dot x=v=\xi_3$.

If we start with the dynamics of the isotropic harmonic oscillator, say
$\dot{\vec r}=\vec p$ and $\dot{\vec p}=-\vec r$, on functions
$\eta_1=\xi_1-\frac 12 \xi_2$, $\xi_3$ and $\eta_2=\xi_1+\xi_2$, we would get
$\dot \eta_1=2\xi_3$, $\dot \xi_3=2\xi_1-\xi_2$ and $\dot \eta_2=0$, i.e. $\dot
\eta_1=2\xi_3$ and $\dot \xi_3=-2\eta_1$, i.e. we get a one dimensional oscillator. We
would like to stress that the ``position'' of this reduced system, say $\eta_1$ is not
a function depending only on the initial position variables.

\begin{remark}
  Let us point out a general aspect of the example we just
  considered. We first notice that the functions $\xi_1=\frac 12 x_ax^a$,
  $\xi_2=p_ap^a$ and $\xi_3=x_ap^a$ may be defined on any phase space
  $\R^{2n}=T^*\R^n$, with $\R^n$ an Euclidean space. If we consider the
  standard Poisson bracket, say 
$$
\{ p_a, x^b\} =\delta_a^b, \quad \{p_a, p_b\} =0=\{ x^a, x^b\}, 
$$
we find that for the new variables
\begin{equation}
  \label{eq:pbr}
\{ \xi_3, \xi_1\} =2\xi_1, \quad \{ \xi_2, \xi_3\} =2\xi_2, \quad \{ \xi_2, \xi_1\} =2\xi_3.  
\end{equation}

Thus the functions we are considering close on  the Lie algebra
$\mathfrak{sl}(2, \R)$.  The infinitesimal generators $\{\xi_i, \cdot \}$ are
complete vector fields and integrate to a symplectic action of $SL(2, \R) $ on
$\R^{2n}$. 

Then, in the stated conditions there is always a symplectic action of
$SL(2,\R)$ on $T^*\mathbb{R}^n\simeq\mathbb{R}^{2n}$ with a corresponding momentum
map $\mu:T\R^n\to \mathfrak{sl}^*(2, \R)$. If we denote again the coordinate
functions on this three dimensional vector space by 
$\{\xi_1, \xi_2, \xi_3\} $, we have the Poisson bracket  (\ref{eq:pbr}) and the
momentum map provides a symplectic realization of the Poisson manifold $
\mathfrak{sl}^*(2, \R)$. In the language of Lie, the coordinate functions
$\{\xi_1, \xi_2, \xi_3\} $ along with all the smooth functions of them $\{ f(\xi_1, \xi_2,
\xi_3)\} $ define a function group. The Poisson subalgebra of functions of
$\F(\R^3)$ commuting with all the functions $f(\xi_1, \xi_2, \xi_3)$, constitute the
reciprocal function group, and all functions in the intersection of both sets,
say functions of the form $F(\xi_1\xi_2 -\frac 12 \xi_3^2)$, constitute the indicial
functions.

By setting $\xi_1=\frac 12, \xi_3=0$ we identify a submanifold in $T\R^n$
diffeomorphic with $TS^{n-1}$, the tangent bundle of the $(n-1)$-dimensional
sphere. It is clear that the reciprocal function group  is generated by
functions $J_{ab}=p_ax^b-p_bx^a$. Thus, the reduced dynamics which we usually
associate with the Hamiltonian $H=\frac 12 p_r^2+\frac{l^2}{2r^2}$ is actually
a dynamics on $\mathfrak{sl}^* (2, \R)$ and therefore it has the same form
independently of the dimension of the space $T\R^n$ we start with. Symplectic
leaves in $\mathfrak{sl}^*(2, \R)$ are diffeomorphic to $\R^2$ and pairs of conjugated
variables may be introduced as 
\begin{eqnarray*}
\left \{ \frac{\xi_3}{\xi_1}, \frac 12 \xi_1\right \} &=1 \mathrm{\quad  or \quad}
\left \{  \frac 12 \xi_2, \frac{\xi_3}{\xi_2}\right \} =1 \mathrm{\quad or \quad} \\
\left \{ \frac{\xi_3}{\sqrt{2\xi_1}}, \sqrt{2 \xi_1}\right \} &=1 \mathrm{\quad  or \quad}
\left \{ \frac{\xi_3}{\sqrt{2\xi_2}}, \sqrt{2 \xi_2}\right \} =1.
\end{eqnarray*}

\end{remark}

We see in all these examples that the chosen invariant submanifold appears
eventually as a ``coupling constant'' in the reduced dynamics. Moreover, the final
``second order description'' may be completely unrelated to the original one.

Another remark is in order. We have not specified the signature of our
scalar product on $\R^{3}$. It is important to notice that the final result
does not depend on it. However, because in the reduced dynamics $\xi _{1}$
appears in the denominator, when the scalar product is not positive definite
we have to remove the full algebraic variety $\langle \vec{r},\vec{r}\rangle
=0$ to get a smooth vector field. If the signature is ($+ ,+ ,-),$ 
the relevant group will not be SO(3) anymore but will be replaced by
 $SO(2,1)$.

We can summarize by saying that  the reduction of the various examples that we have
considered are based on the selection of an invariant submanifold and the
selection of an invariant subalgebra of functions.

A few more remarks are necessary:

\begin{remark}
If we consider the Lagrangian description of the free particle 
$$
{L}=\frac{1}{2}\langle \dot{\vec{r}},\dot{\vec{r}}\rangle ,
$$
in polar coordinates it becomes 
$$
{L}=\frac{1}{2}\left( \dot{r}^{2}+r^{2}(\dot{\vec{n}})^{2}\right) ,
$$
which restricted to the submanifold $l^{2}=r^{4}(\dot{\vec{n}})^{2}$ would
give 
$$
{L}=\frac{1}{2}\left( \dot{r}^{2}+\frac{l^{2}}{r^{2}}\right) ,
$$
which is not
 the Lagrangian giving rise to the dynamics $\ddot{r}=l^{2}/
{r^{3}}$. Therefore, we must conclude that the reduction, if done in the
Lagrangian formalism,  must be considered
as a symplectic reduction in terms of the symplectic structure of Lagrangian
systems (i.e. in terms of the symplectic form $\omega _{{L}}$ and
the energy function $E_{{L}}$).
\end{remark}

\begin{remark}
The free particle admits many alternative Lagrangians, therefore once an
invariant submanifold $\Sigma $ has been selected, we have many alternative
symplectic structures to pull-back to $\Sigma $ and define alternative
involutive distributions to quotient $\Sigma $. The possibility of endowing
the quotient manifold with a tangent bundle structure has to be investigated
separately because the invariant submanifold $\Sigma $ does not need to have
a particular behaviour with respect to the tangent bundle structure.
A recent generalization consists of considering that the quotient space may not
have a tangent bundle structure but may have  a Lie algebroid
structure. Further examples and  additional comments on previous
examples may be found in \cite{LanMarSpaVi:1991,ManMar:1992}. 
\end{remark}

We shall close now these preliminaries by commenting on the generalization
of this procedure to free systems on higher dimensional spaces.

\subsection{Generalized polar coordinates}

In the existing literature, examples have been already considered to get
Calogero-Moser potentials, Toda 
and other well-known systems, starting with free or harmonic motions on the
space of $n\times n$ 
Hermitian matrices, free motions on $U(n)$, and free motions on the coset space 
$GL(n, \C)/U(n, \C)$ \cite{OlPere:1981}.
 
The main idea is to start with a space of diagonalizable matrices $\{ X\}$
and to consider a diagonalizing matrix $G$ in such a way that 
$$
X=GQG^{-1}. 
$$

The diagonal matrix will play the role of ``radial coordinates'' while $G$
plays the role of angular coordinates. Some care is needed when the
parametrization of $X$, $X=X(Q, G)$, is not unique.

\begin{example}
Let us study again the three dimensional example discussed above. Consider matrices 
$$
X=
\left ( \begin{array}{cc}
x_1 & \frac{x_2}{\sqrt{2}} \\
\frac{x_2}{\sqrt{2}} & x_3
\end{array}\right ),
$$
satisfying  the evolution equation
$$ \ddot{X}=0\,.
$$
Therefore, the matrix $M=[X,\dot{X}]$ is such that 
$ {dM}/{dt}=0$, because 
$$\dot M = [\dot X, \dot X ] + [X, \ddot X ] = 0\ .$$

We can introduce  new coordinates for the symmetric matrix $X$ by using the
rotation group:  such 
a matrix $X$ can be diagonalized by means of  an orthogonal
transformation $G$, thus, $X$ can be written as
$X=GQG^{-1}$ with 
$$
Q=\left ( \begin{array}{cc}q_1 & 0 \\ 0 &q_2 \end{array}\right )\,, \qquad G=
\left ( \begin{array}{cc}\cos \phi & \sin \phi \\ -\sin \phi &
  \cos\phi \end{array}\right )\quad
$$
and therefore, as 
$$
G\,Q\, G^{-1}=
\left (
\begin{array}{cccc}
q_1 \cos^2\phi+q_2\,\sin^2\phi &&&(q_2-q_1)\,
\sin\phi\,\cos\phi\\
(q_2-q_1)\,\sin\phi\,\cos\phi &&&
q_1\,\sin^2\phi+q_2\,\cos^2\phi
\end{array}
\right )
$$
and $x_1=q_1\cos^2\phi+q_2\sin^2\phi $, $x_3=q_1\sin^2\phi+q_2\cos^2\phi$,
we get the relation 
$$x_1+x_3  = q_1 + q_2 \,, \qquad x_2 = \frac{1}{\sqrt 2}
(q_2 - q_1) \sin 2\phi \,, \qquad x_1-x_3 =  (q_1 - q_2)
\cos 2\varphi\ .$$

Note that $ G^{-1}\dot{G}= \dot{G} \,G^{-1}=\dot{\phi}\sigma $ and
$G\,\sigma=\sigma\, G$, where
$$
 \sigma =i\, \sigma_2=\left ( \begin{array}{cc} 0 & 1 \\ -1 & 0 \end{array}\right )\,,
$$
and $M=l\,\sigma$, where $l$ is the sum of the first and third components of the
angular momentum.

Then, using  
$$\frac d {dt}G^{-1}=-G^{-1}\,\dot G\, G^{-1}\,,
$$
we see that 
$$\dot{X} = \dot G\, Q\,G^{-1}-G\,Q\,G^{-1}\,\dot G\, G^{-1}+G\,\dot Q\,
G^{-1}=
G \left( [G^{-1}\dot G, Q] + \dot Q \right) G^{-1}  ,$$
i.e.
$$
\dot{X} = G(\dot Q+\dot\phi \,[\sigma,Q])\,G^{-1}\,.
$$

Notice that 
$[\sigma,Q]=(q_2-q_1)\,\sigma_1$ and $[Q,\dot Q]=0$. Consequently, 
$$M=[X,\dot X]= G\,[Q,\dot\phi \,[\sigma,Q]+\dot Q]\,G^{-1}=\dot\phi \, 
(q_2-q_1)G\,[Q,\sigma]\,G^{-1}=-\dot\phi \, 
(q_2-q_1)^2\,\sigma\,,
$$
and then $l$ is given by 
$$l=\dot\phi \, 
(q_2-q_1)^2\,.
$$

The equations of motion along the radial variables become 
$$
\ddot{Q}-\dot{\phi}^{2}[\sigma ,[\sigma ,Q]]=0
$$

Restricting to the submanifold $\Sigma _{l}$
given by
$$
\Sigma _{l}=\left\{l=-\frac{1}{2}\Tr M\sigma \right\}\,.
$$
 we find 
$$
\ddot{Q}=\frac{l^{2}}{(q_{2}-q_{1})^{4}}[\sigma ,[\sigma ,Q]]
$$
or, more explicitly 
$$
\ddot{q}_{1}=-\frac{2l^{2}}{(q_{2}-q_{1})^{3}}\quad \ddot{q}_{2}=\frac{2l^{2}%
}{(q_{2}-q_{1})^{3}}
$$
They provide us with Calogero equations for two interacting particles on a line
and  are the Euler--Lagrange equations associated with the
Lagrangian function,
$$L = \frac 12 \left( \dot q_1^2 - \dot q_2^2 \right) - \frac{g^2}{(q_2
- q_1)^2}\ .$$

\end{example}

\subsection{A Lagrangian description and solutions of the Hamilton-Jacobi equation}
\label{sec:lagr-descr-solut}
On the space of symmetric matrices $\{ X\} $ we define the Lagrangian function 
$L=\frac 12 \Tr (\dot X \dot X)$. This Lagrangian gives rise to the
Euler-Lagrange equations of motion $\ddot X=0$.
 Moreover, the symplectic
structure associated to it is defined by $\omega_L=\Tr (d\dot X\land dX)$ and
$E_L=L$. The invariance of the Lagrangian under translations and rotations
implies the conservation of linear momentum $P=\dot X$ and angular momentum
$M=[X, \dot X]$. The corresponding explicit solutions of the dynamics are thus
given by
$$
X(t)=X_0+tP_0.
$$

It is possible to find easily a solution of the corresponding Hamilton-Jacobi
equation. Indeed,
by integrating the Lagrangian along the solutions or by solving
$P_tdX_t-P_0dX_0=dS(X_t, X_0;t)$ with $P_t=P_0=P$ and $P=t^{-1}(X(t)-X_0)$, we
find that the action is written as $S=\frac 1{2t}\Tr(X_t-X_0)^2$.

By fixing a value $\ell^2=\frac 12 \Tr M^2$ we select an invariant manifold
$\Sigma$. The corresponding reduced dynamics gives the Calogero equations.
 Therefore we restrict
$S$ to those solutions which satisfy $\frac 12 \Tr (X_t^2X_0^2-(X_tX_0)^2)=\ell^2$
and we find a solution for the Hamilton-Jacobi equation associated with the
reduced dynamics.

\begin{remark}
  For any invertible symmetric matrix $K$, the Lagrangian function $L_K=\frac
  12 \Tr \dot X K \dot X$ would describe again the free motion. More generally,
  for any monotonic function $f$, the composition $f(L_K)$ would be a possible
  alternative Lagrangian. The corresponding Lagrangian symplectic structure
  could be used to find alternative Hamilton-Jacobi equations. For those
  aspects we refer to \cite{CaGraMarMarMunRom:2006}.
\end{remark}

\section{Summarizing and formalizing}

\bigskip To prepare the ground for the Poisson reduction we emphasize that
the reduction procedure that we have considered so far uses two basic
ingredients:

\begin{itemize}
\item  An invariant subalgebra (of functions) $R$.

\item  An invariant submanifold of the carrier space $\Sigma \subset M$.
\end{itemize}

\subsection{The geometrical description}

Let us try to identify the basic aspects of the reduction procedures we shall consider.
We denote by $M$ the manifold  containing the states of our 
system. The equations of motion will be represented by a vector field $\Gamma
$, and we suppose that it gives rise to a one parameter group of
diffeomorphisms 
$$
\Phi:\R\times M\to M \,.
$$
Occasionally, when we want our map to keep track of the infinitesimal generator we
will write $\Phi_\Gamma$ or $\Phi_\Gamma:\mathbb{R}\times M\to M$.

To apply the  general reduction procedure we need:

\begin{itemize}
\item  a submanifold $\Sigma $, invariant under the $\Phi $ evolution, i.e. 
$$
\Phi (\mathbb{R}\times \Sigma )\subset \Sigma\,, \mathrm{\quad or \quad}\Phi (t,m)\in \Sigma\,, \ \forall t\in \R,
\,\,m\in \Sigma \,.
$$

\item  An invariant equivalence relation among points of $\Sigma $, i.e. we consider
  equivalence relations for which 
$$
m\sim m^{\prime }\Rightarrow \Phi (\mathbb{R},m)\sim \Phi (\mathbb{R},m^{\prime })\,.
$$
\end{itemize}

The reduced dynamics or ``reduced evolution'' is defined on the manifold of
equivalence classes (assumed to be endowed with a differentiable structure).

One may also start the other way around: we could first consider an
invariant equivalence relation on the whole manifold $M$ and then select an
invariant submanifold for the reduced dynamics, to further reduce the
dynamical evolution.

\subsubsection{Some remarks}

In real physical situations the invariant submanifolds arise as level set of
functions. These level sets were called \textbf{invariant relations} by Levi-Civita
\cite{AmalLev:1926} to distinguish them from level sets of constants of the
motion. Usually, 
equivalence classes will be generated by orbits of Lie groups or leaves of
involutive distributions. ``Closed subgroup'' theorems are often
employed to guarantee the regularity of the quotient manifold  \cite{Palais:1957}.

When additional structures are present, like Poisson or symplectic
structures, it is possible to get involutive distributions out of a family
of invariant relations. The so called ``symplectic reduction'' is an example
of this particular situation.

When the space is endowed with additional structures, say a
tangent or a cotangent bundle, with the starting dynamics being, for
instance, second order (in the tangent case), we may also ask for the reduced
one to be second 
order, once we ask the reduced space to be also endowed with a tangent space
structure. This raises natural questions on how to find appropriate tangent
or cotangent bundle structures on a given manifold obtained as a reduced
carrier space. Similarly, we may start with a linear dynamics, perform a
reduction procedure (perhaps by means of quadratic invariant relations) and
enquire on possible linear structures on the reduced carrier space.
A simple example of this situation is provided by the Maxwell equations. These
equations may be written in terms of the Faraday  $2$--form $F$ encoding the
electric field $E$ and the magnetic field $B$, as:
$$
dF=0 \quad d *F=0,
$$
when considered in the vacuum \cite{MarParaTulcz:2005}. We may restrict
these equations to the invariant submanifold 
$$
F\land F=0,  F\land *F=0.
$$

Even though these relations are quadratic the reduced Maxwell equations provide
as solutions the radiation fields and are still linear.

In conclusion, when additional structures are brought into the picture, we
may end up with extremely rich mathematical structures and quite difficult
mathematical problems.
\begin{example}
{\bf A charged non-relativistic particle in a magnetic monopole field}

This system was considered by Dirac \cite{Dirac:1931} and a variant of it, earlier by
Poincar\'e \cite{Poin:1896}. To describe it in terms of a
Lagrangian Dirac introduced a ``Dirac string''. The presence of this unphysical
singularity leads to technical difficulties in the quantization of this
system. Several proposals have been made to deal with these problems. 

Here we would like to show how our reduction procedure allows to deal with this
system and provides a clear way for its quantization. In doing this we shall
follow mainly \cite{BaMaSkaSter:1980,BaMaSkaSter:1983,BaMaSkaSter:1991}

The main idea is to replace $\R_0^3$ with $\R_0^4$ described  as the product
$\R_0^4=S^3\times \R^+$, and to get back our space of relative coordinates for
the charge-monopole by means of a reduction procedure.

We set first $\vec x\cdot\vec \sigma=rs\sigma_3s^{-1}$, where
$r^2=x_1^2+x_2^2+x_3^2$ and $s\in SU(2)$ (realized as $2\times 2$ matrices of
the defining representation; while $\{ \sigma_1, \sigma_2,\sigma_3\} $ are the
Pauli matrices. We write the Lagrangian function on $\R_0^4$ as
$$
L=\frac 12 m \Tr \left (\frac d{dt}(rs\sigma_3s^{-1})\right )^2-k(\Tr
\sigma_3s^{-1}\dot s)^2.
$$ 

This expression for the Lagrangian shows clearly the invariance under the left
action of $SU(2)$ on itself and an additional invariance under the right action
$s\mapsto se^{i\sigma_3\theta}$ for $\theta\in [0,2\pi)$. It is convenient to
introduce left invariant one forms $\theta^a$ by means of  $i\sigma_a \theta ^a=s^{-1}ds$
and related left invariant
vector fields $X_a$ which are dual to them $\theta^a(X_b)=\delta^a_b$. If
$\Gamma$ denotes any second order vector field on $\R_0^4$ we set $\dot
\theta^a=\theta^a(\Gamma)$, where, with some abuse of notation, we are using
the same symbol for $\theta^a$ on $\R_0^4$ and its pull-back to $T\R_0^4$. It
is also convenient to use the unit vector $\vec n$ defined by ${\vec x}=
{\vec n}r$, i.e $\vec n \vec \sigma=s\sigma_3s^{-1}$.

After some computations, the Lagrangian becomes
\begin{equation*}
  L=\frac 12 m\dot r^2+\frac 14 mr^2(\dot \theta_1^2+\dot
  \theta_2^2)+k\dot \theta_3^2.
\end{equation*}

It is not difficult to find the canonical $1$-- and $2$--forms for the Lagrangian
symplectic structure. For instance $\theta_L=m\dot rdr+\frac 12 mr^2(\dot
\theta_1\theta_1+\dot \theta_2\theta_2)+2k\dot \theta_3\theta_3$; and of
course $\omega_L=d\theta_L$. The energy function $E_L$ coincides with $L$.

If we fix the submanifold $\Sigma_c$ by setting 
$$
\Sigma_c=\{ ((r,v)\in T\R_0^4 \mid  \dot \theta_3=c\} ,
$$
the submanifold turns out to be invariant because $\dot \theta_3$ is a constant
of the motion.

On $\Sigma_c$, $\theta_L=m\dot r dr+\frac 12 mr^2(\dot
\theta_1\theta_1+\dot \theta_2\theta_2)+2kc\theta_3$. If we then use the
foliation associated with $X_3^T$ (the tangent lift of $X_3$ to $T\R_0^4$), we
find that $\omega_L$ is the pull-back of a 2--form on the quotient because $d\theta_3=\theta_1\land
\theta_2$, and hence contains $X_3^T$ in its kernel. The term $d\theta_3$ is
exactly proportional to the magnetic field of the magnetic monopole sitting at
the origin. Thus on the quotient space of $\Sigma_c$ by the action of the left
flow of $X_3^T$ we recover the dynamics of the electron-monopole system on the
(quotient) space $T(S^2\times \R)=T\R_0^3$. It is not difficult to show that 
\begin{equation*}
  \frac d {dt} \left ( -\frac i2 [\vec n\vec \sigma, mr^2\dot{\vec
      n}\vec \sigma]+k\vec n \vec \sigma \right )=0; \quad k=\frac {eg}{4\pi}.
\end{equation*}
These constants of the motion are associated with the rotational invariance and
replace the usual angular momentum functions.
\end{example}

This example shows that the reduction of the Lagrangian system of Kaluza-Klein
type on $T\R^4$  does not reduce to a Lagrangian system on $T\R^3$ but just to
a symplectic system.

\subsection{The algebraic description}

The evaluation map $ev:M\times \F\to \R$ defined as $(m,f)\mapsto f(m)$,
allows to dualize the basic ingredients from the manifold to the algebra of
functions on $M$, the observables.

We first notice that to any submanifold $\Sigma \subset M$ we can associate
a short exact sequence of associative algebras 
$$\xymatrix{0\ar[r]&\I_\Sigma\ar[r]&\F\ar[r]^{\pi_\Sigma }&\F_\Sigma\ar[r]&0}
$$
defined in terms of the identification map $i_\Sigma:\Sigma \hookrightarrow M
$, $\Sigma\ni m\mapsto m\in M$. We have thus: 
\begin{equation*}  
I_\Sigma=\left \{\, f\in \F \mid i_\Sigma^*(f)=0\,\right \}
\end{equation*}

From the property $i_\Sigma^*(fg)=i_\Sigma^*(f)i_\Sigma^*(g)$ we find that $%
\I_\Sigma$ is a bilateral ideal in $\F$. The algebra $\F_\Sigma$ is then the
quotient algebra $\F/\I_\Sigma$.

Any derivation $\Gamma$ acting on the set of functions of $\F(M)$ will
define a derivation on the set of functions $\F_\Sigma$ if and only if $%
L_\Gamma \I_\Sigma \subset \I_\Sigma$, so that $\Gamma$ acting on
equivalence classes will define a derivation on the reduced carrier space.

A simple example illustrates the procedure. On $T\R^3$ we consider the
bilateral ideal $\I_\Sigma$, when $\Sigma$ is defined from 

\begin{equation}
  \label{eq:constr}
f_1=\vec r \cdot \vec r-1 \quad f_2=\vec r \cdot \vec v,  
\end{equation}
and we set $f_1=0=f_2$.

We get the submanifold $\Sigma$ as $TS^2$. The algebra of functions $\F_\Sigma$ is
obtained from $\F(T\R^3)$ simply by using in the argument of $f(\vec r, \vec
v)$ the constraints  (\ref{eq:constr}), i.e. $f_1=0=f_2$.  A vector field $X$
on $T\R^3$ will be tangent to $\Sigma=TS^2$  if and only if 
$$
L_X(\vec r\cdot  \vec r-1)=\alpha (\vec r \cdot \vec r-1)+\beta \vec r \cdot \vec v,
$$
for arbitrary functions $\alpha, \beta$ and also 
$$
L_X(\vec r\cdot  \vec v)=\alpha' (\vec r \cdot \vec r-1)+\beta' \vec r \cdot \vec v.
$$

It is not difficult to show that the module of such derivations is generated by 
$$
\mathcal{R}_l=\epsilon_{jkl}\left (x_j\pd{x_k}+v_j\pd{v_k}\right ); \quad \mathcal{V}_l=\epsilon_{lij} x_j\pd{v_i}
$$

An invariant subalgebra in $\F$, say $\widetilde{\F}$, for which $\I_\Sigma$
is an ideal, defines an invariant equivalence relation by setting 
\begin{equation}  \label{eq:equi}
m^{\prime}\sim m^{\prime\prime}\mathrm{\quad iff \quad } f(m^{\prime})=f(m^{\prime%
\prime}), \quad \forall f\in \widetilde{\F}
\end{equation}

It follows that $\widetilde{\F}$ defines a subalgebra in $\F_\Sigma$ and
corresponds to a possible quotient manifold of $\Sigma$ by the equivalence
relation defined by $\widetilde{\F}$.

In general, a subalgebra in $\F$, say $\F_Q$, defines a short exact
sequence of Lie algebras
\begin{equation}  \label{eq:subalshort}
0\longrightarrow \mathfrak{X}^v\longrightarrow \mathfrak{X}^N\longrightarrow %
\mathfrak{X}_Q\longrightarrow 0
\end{equation}
where $\mathfrak{X}^v$ is the Lie algebra of vector fields annihilating $\F_Q
$, $\mathfrak{X}^N$ is the normalizer of $\mathfrak{X}^v$ in $\mathfrak{X}(M)
$, and $\mathfrak{X}_Q $ is the quotient Lie algebra. This sequence of Lie
algebras may be considered a sequence of Lie modules with coefficients in $\F%
_Q$. In the previous case, $\F_Q$ would be the invariant subalgebra in $\F%
_\Sigma $ and the equivalence relation would be defined by the leaves of the
involutive distribution $\mathfrak{X}^v$ (regularity requirements should be
then imposed on $\F_Q$). See \cite{LanMar:1990} for details.

From the dual point of view it is now clear that reducible evolutions will
be defined by one-parameter groups of transformations which are
automorphisms of the corresponding short exact sequences. The corresponding
infinitesimal versions will be defined in terms of derivations of the
appropriate short exact sequence of algebras.

To illustrate this aspect, we consider the associative subalgebra of
$\F(T\R^3)$ generated by $\{ \vec r \cdot \vec r, \vec v\cdot \vec v, \vec r \cdot \vec v\} $. For
this algebra it is not difficult to see that the vector fields 
$$
X_c=\epsilon_{abc}\left(   x^a\pd{x_b}+v^a\pd{v_b} \right )
$$
generate $\mathfrak{X}^v$, while $\mathfrak{X}^N$ is generated by $\mathfrak{X}^v$ and 
$$
\vec r \pd{\vec v}, \quad \vec v\pd{\vec r}, \quad \vec r\pd{\vec r}, \quad  \vec v
\pd{\vec v}.
$$
The quotient  $\mathfrak{X}_Q$, with a slight abuse of notation, can also be
considered to be  generated by the vector fields
$$
\vec r \pd{\vec v}, \quad \vec v\pd{\vec r}, \quad \vec r\pd{\vec r}, \quad \vec v
\pd{\vec v},
$$
which however are not all independent. Any combination of them with
coefficients in the subalgebra may be considered a ``reduced dynamics''.

\subsection{Additional structures: Poisson reduction}

When a Poisson structure is available, we can further qualify the previous
picture. We can consider associated short exact sequences of Hamiltonian
derivations. Hence, a Poisson reduction can be formulated in the following
way: we start with $\I_{\Sigma }$, again an ideal in the commutative and
associative algebra $\F$. We consider then the Hamiltonian derivations which
map $\I_{\Sigma }$ into itself: 
$$
W(\I_{\Sigma })=\{f\in \F\,  \mid \{f,\I_{\Sigma }\}\subset \I_{\Sigma }\}\,.
$$
Then we consider $\I_{\Sigma }^{\prime }=I_{\Sigma }\cap W(\I_{\Sigma })$
and get the exact sequence of Poisson algebras 
$$
0\longrightarrow \I_{\Sigma }^{\prime }\longrightarrow W(\I_{\Sigma
})\longrightarrow Q_{\Sigma }\longrightarrow 0\,.
$$

When the ideal $\I_\Sigma$ is given by constraint functions as in the Dirac
approach, $W(\I_\Sigma)$ are first class functions and $\I^{\prime}_\Sigma$
are the first class constraints.

\begin{example}

We give here an example of an iterated reduction. We consider a
parametrization of $T\R^4$ in terms of the identity matrix in dimension 2
($\sigma_0$) and the $2\times 2$ Pauli matrices as follows: $\pi=p_0\sigma_0+p_a\sigma_a$ and
$g=y_0\sigma_0+y_a\sigma_a$.  

A preliminary ``constraint'' manifold is selected by requiring that 
$$
\Tr g^+g=1 \quad \Tr g\pi^+=0.
$$

This manifold is diffeomorphic to the tangent bundle of $S^3$,
i.e. $TS^3$. The Hamiltonian $H=(p_\mu p^\mu)(y_\mu y^\mu)$ defines a vector field
tangent to the constraint manifold. Similarly for the ``potential'' function
$V=\frac 12 (y_0^2+y_3^2-y_1^2-y_2^2)/y_\mu y^\mu$. 

The Hamiltonian function $\frac 12 H +V$, when restricted to $TS^3$ with a
slight abuse of notation  acquires
the suggestive form
$$
H=\frac 12 (p_0^2+p_3^2+y_0^2+y_3^2)+\frac 12 (p_1^2+p_2^2-y_1^2-y_2^2). 
$$

By using the relation $y_0^2+y_1^2+y_2^2+y_3^2=1$ we may also write it in the form
$$
H=\frac 12 (p_0^2+p_3^2+2(y_0^2+y_3^2))+\frac 12 (p_1^2+p_2^2)-\frac 12. 
$$

Starting now with $TS^3$ we may consider the further reduction by fixing
$$
\Sigma_K=\{ (y_\mu, p^\nu)\in TS^3\mid  y_0p_3-p_0y_3+y_1p_2-y_2p_1=K\}, 
$$
and quotienting by the vector field
$$
X=y_0\pd{y_3}-y_3\pd{y_0}+y_1\pd{y_2}-y_2\pd{y_1}+
p_0\pd{p_3}-p_3\pd{p_0}+p_1\pd{p_2}-p_2\pd{p_1}.  
$$

The final reduced manifold will be $TS^2\subset T\R^3$, with projection $TS^3\to TS^2$
provided by the tangent of the Hopf fibration $\pi:S^3\to S^2$, defined as
$$
x_1=2(y_1y_3-y_0y_2) \quad x_2=2(y_2y_3-y_0y_1)\quad
x_3=y_0^2+y_3^2-y_1^2-y_2^2 .
$$

The final reduced dynamics will be associated with the Hamiltonian function of
the spherical pendulum.

The spherical pendulum is thus identified by
$$
S^2\subset \R^3=\{ x\in \R^3 \mid  \langle x, x\rangle=x_1^2+x_2^2+x_3^2=1\} 
$$
$$
TS^2\subset T\R^3=\{ (x,v)\in \R^3\times \R^3 \mid  \langle x, x\rangle =1, \,\,
\langle x, v\rangle =0\} 
$$

The dynamics is given by means of $\omega=\sum_idx_i\land dv_i$ when
restricted to $TS^2$, in terms of $E=\frac 12 \langle v,v\rangle +x_3$. The
angular momentum is a constant of the motion corresponding to the rotation
around the $Ox_3$ axis. The energy momentum map
$$
\mu:TS^2\to \R^2: (x,v)\mapsto (E(x,v), L(x,v))
$$
has quite  interesting properties as shown by \cite{CushBat:1997,Duis:1980}. 

\end{example}

\section{Formulations of Quantum Mechanics}

Having defined Poisson reduction we are now on the good track to define a
reduction procedure for quantum systems. After all, according to deformation
quantization the Poisson bracket provides us with a first order
approximation to Quantum Mechanics. However, before entering a general
discussion of the reduction procedure for quantum systems, let us recall
very briefly the various formalisms to describe quantum dynamical evolution.

The description of quantum systems is done basically by means of either the Hilbert
space of states, where we define dynamics by means of the Schr\"{o}dinger
equation, or by means of the algebra of observables, where dynamics is
defined by means of the Heisenberg equation. We may also consider other
pictures like the Ehrenfest picture, the phase-space picture and the $\C%
^{\ast }$--algebra approach.

\subsection{The Schr\"odinger equation in Wave Mechanics}
\subsubsection{The framework}
Let us consider first the usual description of Schr\"odinger formulation of
Quantum Mechanics. We consider the set of states of our quantum system  to be
the space of square integrable functions on some domain $D$ (which, for
simplicity can be assumed to be some open subset of $\R^n$ but that can also be
considered to be a general differential manifold, possibly with boundary). Thus
the Hilbert 
space describing the set of states will be $L^2(D, d\mu)$, where we denote by
$d\mu$ the measure associated to a volume form. The states themselves will be denoted as
$\psi$ or as $|\psi \rangle$, in the standard bra-ket notation. Observables are required to be
symmetric operators on this space, and usually realized as differential
operators. In this setting, dynamics is introduced through the Schr\"odinger equation: 
\begin{equation}
i\hbar \frac{d}{dt}\psi =H\psi \quad \psi \in \Hil  \label{eq:schrodinger}
\end{equation}
where the Hamiltonian operator $H$ corresponds to an essentially  self-adjoint
differential operator acting on $L^2(D, d\mu)$ and written as
\begin{equation}
  \label{eq:Hamilt}
  H\psi=\left ( -{\hbar^2}\frac{\partial^2}{\partial x^2}+V(x)\right ) \psi(x),
\end{equation}
where $V(x)$ represents the potential energy of the system usually
assumed to act as a multiplicative operator.

Dynamics can also be encoded in a unitary operator $U(t,t_0)$ such that
$$
|\psi(t)\rangle =U(t,t_0)|\psi(t_0)\rangle. 
$$
It is possible to write a differential equation to encode Schr\"odinger equation
on $U$ by setting
$$
i\hbar \frac{d}{dt}U(t,t_0)=HU(t,t_0).
$$

By using eigenstates of the position operator $Q| x \rangle=x|x\rangle $, we can write the
identity in the form
$$
\mathbb{I}=\int_D| x\rangle  dx \langle x |.
$$
Then we can give the operator $U$ an integral form:
$$
\psi(x,t)=\langle x, \psi(t)\rangle =\int_D G(x, t;x_0,t_0)\psi(x_0, t_0)dx_0
$$
where $G(x,t;x_0,t_0)=\langle x, U(t,t_0)x_0\rangle $. The function $G$ is known as the
propagator or the Green function of the system.

Schr\"odinger equation exhibits interesting properties but we would like to focus now on
the fact that it can be given a Hamiltonian form with respect to the symplectic
structure that can be associated to the imaginary part of the Hermitian
structure of the Hilbert space, when considered as a real manifold.
We will elaborate a little further on this statement in the next sections.

Keeping this in mind we can think now of the analogue of the reduction procedures
that we have seen in the classical setting. The idea is quite the same for
simple examples such as the free motion, but the quantum nature of the system
provides us with some new features:

\subsubsection{Example: The reduction of free motion in the quantum case}

The description of the free quantum evolution is rather simple because the
semi-classical treatment is actually exact  \cite{EMS:2004}. In what follows we are
setting $\hbar=1$ for simplicity.

The Hamiltonian operator for free motions in two dimensions, written in polar
coordinates is
$$
H=-\frac 12 \frac 1Q \pd{Q} Q \pd{Q} -\frac 1{Q^2}\pdd{}{\phi}\,.
$$

By a similarity transformation $H'=Q^{\frac 12}HQ^{-\frac 12}$ we get rid of
the linear term and obtain
$$
H'=-\frac 12 \left ( \pdd{}{Q}+\frac 1{Q^2}\left ( \frac 14 +\pdd{}{\phi} \right )
\right )\,.
$$

Restricting $H'$ to the subspace of square
integrable functions of the form $\S_m=\{ \psi=e^{im\phi}f(Q)\} $, we find that on
this particular subspace
$$
H'\psi=-\frac 12 \left ( \pdd{}{Q}-\frac 1{Q^2}\left (m^2-\frac 14\right )\right)\psi.
$$

Going back to the Hamiltonian operator, we find  
$$
Q^{\frac 12}HQ^{-\frac 12}(Q^{\frac 12}\psi)=-\frac 12 \left ( \pdd{}{Q}-\frac
  1{Q^2}\left (m^2-\frac 14\right )\right )Q^{\frac 12}\psi =EQ^{\frac 12 }\psi. 
$$
This determines a Hamiltonian operator along the radial coordinate and setting
$g^2=m^2-\frac 14$ we have
$$
\widetilde H=-\frac 12 \pdd{}{Q}+\frac 12 \frac {g^2}{Q^2}.
$$
If we parametrize the Euclidean space with matrices $X$,
solutions of the free problem, in generic coordinates $\{ X\}$, are given, of
course, by wave-packets formed out of ``plane-waves''
$$
\psi_P(X)=Ae^{i\Tr XP},
$$
where $A$ is a normalization constant, chosen in such a way as to give a delta
function normalization.

By decomposing $X$ into a ``radial'' part $Q$ and an ``angular'' part $G$, say
$X=G^{-1}QG$,  we can write the wave function in the form

$$
\psi(Q, G)=Ae^{i\Tr (G^{-1}QGP)}=\psi_P(X).
$$

In this particular case it is not   difficult to show that $I_j(X,P)=\Tr(P^j)$
are constants of the motion in involution and give rise to the operators
$(-i)^j\Tr \left (\pd{X}\right )^j$.

To perform specific computations let us go back to the two-dimensional
situation. We consider $\psi_P=A e^{i\Tr PX}$  and project it along the
eigenspace $\S_m$ of the angular momentum corresponding to
the fixed value $m$. 

We recall that (in connection with the unitary representations of the Euclidean
group)
$$
\int_0^{2\pi}d\phi\, e^{im\phi}\,e^{iPQ\cos \phi}=2\pi J_m(PQ),
$$
where $J_m$ is the Bessel function of order $m$. Thus we conclude 
$$
\psi_P(Q)=2\pi \sqrt{PQ}\,J_m(PQ).
$$

In the particular case we are considering  free motion is described by a
quadratic Hamiltonian in $\R^2$. Therefore the Green function becomes
$$
G(X_t-X_0, 0; t)=\frac C{2t}e^{i\frac{\Tr (X_t-X_0)^2}{t}}\,.
$$
The Green function can be written in terms of the action (the solution of the
Hamilton-Jacobi equation, see Section \ref{sec:lagr-descr-solut}) and the Van
Vleck determinant (\cite{EMS:2004}, appendix 4.B).  

By using polar coordinates the kernel of the propagator is
\begin{eqnarray}
  G(Q_t, Q;t)=\sqrt{Q_tQ_0}\int_0^{2\pi}d\phi e^{im\phi}K(X_t, X_0;t)=
\nonumber \\
=\sqrt{Q_tQ_0}\frac{e^{i\frac{(Q_t^2+Q_0)^2}{2t}}}{2\pi
  it}\int_0^{2\pi}d\phi e^{im\phi}e^{-i\frac {Q_tQ_0\cos \phi}t} =
\sqrt{Q_tQ_0}\frac{e^{i\frac{(Q_t^2+Q_0)^2}{2t}}}{2\pi it} J_m\left (
  \frac{Q_tQ_0}t \right ),
\end{eqnarray}
where the angle is coming from the scalar product of $X_t$ with $X_0$.

\subsection{Reduction in terms of Differential operators}

With this simple example we have discovered that in wave mechanics the
reduction procedure involves differential operators and their eigenspaces. Let
us therefore consider some general aspects of reduction procedures for
differential operators.

In general, the Hamiltonian operator defining the Schr\"odinger equation on
$L^2(\D, d\mu)$ is a differential operator, which may exhibit a complicated
dependence in the potential. It makes sense thus to study a general framework
for the reduction of differential operators acting on some domain $D$, when we
assume that the reduction procedure consists in the suitable choice of some
``quotient'' domain $\D'$.

\subsubsection{Abstract definition of Differential operators}

We consider $\F=C^\infty (\R^n)$, the algebra of smooth functions on $\R^n$.
A differential operator of degree at most $k$ is defined as a linear map $%
D^k:\F\to \F$ of the form 
\begin{equation}  \label{eq:diff}
D^k=\sum_{|\sigma|\leq k}g_\sigma \frac{\partial^{|\sigma|}}{\partial
x_\sigma}, \quad g_\sigma\in \F
\end{equation}
where $\sigma=(i_1, \cdots i_n)$, $|\sigma| =\sum_ki_k$ and 
$$
\frac{\partial^{|\sigma|}}{\partial x_\sigma}= \frac{\partial^{|\sigma|}}{%
\partial x_1^{i_1}\cdots \partial x_n^{i_n}} 
$$

This particular way of expressing differential operators relies on the
generator of ``translations'', $\pd{x_k}$. Therefore, when the reduced space
does not carry an action of the translation group this way of writing
differential operators is not very convenient.
There is an intrinsic way to define differential operators which does not
depend on coordinates \cite{AVL:1991,GrabPon:2004b,GrabPon:2004}. One starts
from the following observation 
$$
\left [ \frac \partial{\partial x_j}, \hat f \right ]=\widehat{\frac{\partial f}{%
\partial x_j}}, 
$$
where $\hat f$ is the multiplication operation by $f$, i.e. an operation of
degree zero $\hat f:g\mapsto fg$, with $f,g\in \F$.

It follows that 
$$
[D^k, \hat f]=\sum_{|\sigma|\leq k}g_\sigma \left [\frac{\partial^{|\sigma|}%
}{\partial x_\sigma}, \hat f\right ], 
$$
is of degree at most $k-1$. Iterating for a set of $k+1$ functions $f_0,
\cdots ,f_k\in \F$, one finds that 
$$
[\cdots,[D^k, \hat f_0],\hat f_1], \cdots, \hat f_k]=0; 
$$

This algebraic characterization allows for a definition of differential
operators on any manifold.

The algebra of differential operators of degree 1 is a Lie subalgebra with respect
to the commutator and splits  into a direct sum
$$
D^1=\F\oplus D_c^1 
$$
where $D_c^1$ are derivations, i.e. differential operators of degree one which give
zero on constants. We can endow the set with a Lie algebra structure by setting
$$
[(f_1, X_1), (f_2,X_2)]=(X_1f_2-X_2f_1, [X_1, X_2])
$$

If we consider $\F$ as an Abelian Lie algebra, $D_c^1$ is
the algebra of its derivations and then $D^1$ becomes what is known in the literature as
the ``holomorph'' of $\F$  \cite{CarIbNArPere:1994}. In
this way the algebra of differential operators becomes the enveloping
algebra of the holomorph of $\F$.

The set of differential operators on $M$, denoted as $\mathcal{D}(M)$, can
be given the structure of a graded associative algebra and it is also a
module over $\F$. Notice that this property would not make sense at the level
of abstract operator algebra. To consider the problem of
reduction of differential operators we consider the problem of reduction
of first order differential operators. Because the zeroth order ones are just
functions, we restrict our attention to vector fields, i.e. the set $D_c^1$.

Given a projection $\pi:M\to N$ between smooth manifolds, we say that a
vector field $X_M$ projects onto a vector field $X_N$ if 
$$
L_{X_M}\pi^*f=\pi^*(L_{X_N}f) \qquad \forall f\in \F(N). 
$$
We say thus that $X_M$ and $X_N$ are $\pi$--related.

Thus if we consider the subalgebra $\pi^*(\F(N))\subset \F(M)$, a vector
field is projectable if it defines a derivation of the subalgebra $\pi^*(\F%
(N))$. More generally,  for a differential operator $D^k$, we shall say that it is
projectable if 
$$
D^k\pi^*(\F(N))\subset \pi^*(\F(N)). 
$$

It follows that projectable differential operators of degree zero are
elements in $\pi^*(\F(N))$. Therefore projectable differential operators are
given by the enveloping algebra of the holomorph of $\pi^*(\F(N))$, when the
corresponding  derivations are considered as  belonging to $\mathfrak{X}(M)$.

\begin{remark}
Given a subalgebra of differential operators in $\mathcal{D}(M)$ it is not
said that it is the enveloping algebra of the first order differential
operators it contains. When this happens, we cannot associate a corresponding quotient
manifold with an identified subalgebra of differential operators. An example of
this situation arises with angular momentum operators when we consider the
``eigenvalue problem'' in terms of $J_z$ and $J^2$. It is clear that this
commuting subalgebra of differential operators can not be generated by its
``first order content''. 

In the quantization procedure, this situation gives rise to anomalies
\cite{AGM:1998}. 
\end{remark}

\subsubsection{Example: differential operators and the Kustainheimo-Stiefel (KS) fibration}

In this section we would like to consider the reduction of differential
operators associated with the KS projection $\pi_{KS}:\R_0^4\to \R_0^3$, where 
$\R^j_0=\R^j-\{0\}$, and show that the hydrogen atom operator may be obtained as
a reduction of the operators associated with a family of harmonic oscillators.

Let us recall first how this map is defined. We first notice that $\R^4_0=S^3\times
\R^+\sim SU(2)\times \R^+$ and $\R^3_0=S^2\times \R^+$. By introducing polar coordinates 
$$
g=Rs \quad s\in SU(2), \quad R\in \R^+, 
$$
we define $\pi_{KS}:\R^4_0\to \R^3_0$ as 
$$
\pi_{KS}: g\mapsto g\sigma_3g^{+}=R^2s\sigma_3 s^{-1}=x^k\sigma_k, 
$$
where $\{ \sigma_k\} $ are the Pauli matrices. In a Cartesian coordinate
system one has 
$$
x_1=2(y_1y_3+y_2y_0) \quad x_2=2(y_2y_3-y_1y_0) \quad
x_3=y_1^2+y_2^2-y_3^3-y_0^2,
$$
where $g=\sum_iy_i\sigma^i$. Moreover, $\sqrt{x^jx_j}=r=R^2=y^ky_k$.

The KS projection defines a principal fibration with structure group $U(1)$.

By the definition of $\pi_{KS}$ it is easy to see that acting with  $%
e^{i\lambda \sigma_3}$ on $SU(2)$ does not change the projected point on $\R%
^3_0$. The associated fundamental vector field is the left
invariant infinitesimal generator associated with  $\sigma_3$, i.e. $i_{X_3}s^{-1}
ds=i\sigma_3$. In coordinates it reads 
$$
X_3=y^3\frac\partial {\partial y^3}-y^3\frac\partial {\partial
y^0}+y^1\frac\partial {\partial y^2}-y^2\frac\partial {\partial y^1} 
$$

We consider the Lie algebra of differential operators generated by $X_3$ and $%
\pi^*_{KS} (\F(\R^3_0))$. Projectable differential operators with respect to
$\pi_{KS}$ are given by 
the normalizer of this algebra in the algebra of differential operators $%
\mathcal{D}(\R^4_0)$. As we already remarked this means that this subalgebra must
map $\pi^*(\F(\R^3_0))$ into itself. If we denote this subalgebra by $%
\mathcal{D}^\pi$ we may also restrict our attention to the operators in $%
\mathcal{D}^\pi$ commuting with $X_3$. To explicitly construct this algebra
of differential operators we use the fact that $SU(2)\times \R_+$ is a Lie
group and therefore it is parallelizable. Because the KS map has been
constructed with left invariant vector field  $X_3$, we consider the generators of the
left action of $SU(2)$, say right invariant vector fields $Y_1, Y_2, Y_3$,
and a central vector field along the radial coordinate, say $\mathcal{R}$.
All these vector fields are projectable and therefore along with $\pi_{KS}^*(%
\F(\R^3_0)$ generate a projectable subalgebra of differential operators
which covers the algebra of differential operators on $\R^3_0$. 
This map is surjective and  we can ask to find the ``inverse image'' of the operator $%
\hat H=-\frac{\Delta_3}2-\frac kr$, which is the operator associated with
the Schr\"odinger equation of the hydrogen atom ($\Delta_3$ denotes the
Laplacian in the three dimensional space). As this operator is invariant
under the action of $\mathfrak{so}(4)\sim \mathfrak{su}(2)\oplus
\mathfrak{su}(2)$, associated with the angular 
momentum and the Runge-Lenz vector, we may look for a representative in the
inverse image which shares the same symmetries. As the pull-back of the
potential $\frac kr$ creates no problems, we may concentrate our attention
on the Laplacian. Because of the invariance requirements, our candidate for
the inverse image will have the form 
$$
D=f(R)\frac{\partial^2}{\partial R^2}+g(R)\frac \partial{\partial R}%
+h(R)\Delta^s_3+c(R), 
$$
where $R$ is the radial coordinate in $\R^4_0$, and $f,g,h$ are functions to be
determined. We recall that in polar coordinates the Laplacian $\Delta_3$ has the
expression 
$$
\Delta_3=\frac{\partial^2}{\partial r^2}+\frac 2r \frac \partial{\partial r}%
+\frac 1{r^2}\Delta^s_2, 
$$
where we denote by $\Delta^s_n$ the Laplacian on the $n$--dimensional sphere.

By imposing $D\pi^*_{KS}f=\pi^*_{KS}(\hat H_3 f)$ for any $f\in \F(\R^3_0)$
we find that the representative in the inverse image has the expression 
$$
H^{\prime}=-\frac 12 \frac 1{4R^2}\Delta_4-\frac k{R^2}. 
$$
This operator is usually referred to as the conformal Kepler Hamiltonian
\cite{DavanMar:2005}. 

Now, with this operator we may try to solve the eigenvalue problem 
$$
\left ( -\frac 12 \frac 1{4R^2}\Delta_4-\frac k{R^2}\right ) \psi -E\psi=0 
$$

It defines a subspace in $\F(\R^4_0)$ which coincides with the one
determined by the equation 
$$
\left ( -\frac 12 \Delta_4-4E{R^2}-4k \right ) \psi =0 .
$$

This implies that the subspace is given by the eigenfunctions of the
Harmonic oscillator with frequency $\omega(E)=\sqrt{-8E}$. We notice then
that a family of oscillators is required to solve the eigenvalue problem
associated with the hydrogen atom. To find the final wave functions on $\R^3$
we must require that 
$L_{X_3}\psi=0$ in
order to find eigenfunctions for the three dimensional problem. Eventually
we find the correct relations for the corresponding eigenvalues 
$$
E_,=-\frac{k^2}{2(m+1)^2}, \quad m\in \mathbb{N}. 
$$

Of course, dealing with Quantum Mechanics we should ensure that the operator 
$H^{\prime}=-\frac 12 \frac1{4R^2}\Delta_4-\frac k{R^2}$ is essentially
self-adjoint to be able to associate with it a unitary dynamics. One finds
that the Hilbert space should be constructed as a space of square integrable
functions on $\R^4_0$ with respect to the measure $4R^2d^4y$ instead of the
Euclidean measure on $\R^4$. We shall not go into the details of this, but
the problem of a different scalar product is strictly related to the
reparametrization of the classical vector field, required to turn it into a
complete one. This would be a good example for J. Klauder's saying: ``
these are classical symptoms of a quantum illness'' (see
\cite{ZhyKlau:1993}). Further details can be found in
\cite{DavanMar:2005,DavanMarVal:2005}.  As for the reduction of the Laplacian
in Quantum Mechanics see also \cite{FePus:2006,FePus:2007b,FePus:2007}. For a
geometrical approach to the problem of self-adjoint extensions see
\cite{AsIbMar:2005}.

\subsection{Heisenberg formalism}

A different approach to Quantum Mechanics is given by what is known as the Heisenberg
picture. Here dynamics is encoded in the algebra of observables,
considered as the real part of an abstract $\C^*$--algebra.  

First, we have to consider  observables as associated with Hermitian operators (finite
dimensional matrices if the system is finite dimensional). These
matrices do not define an associative algebra because the product of two
Hermitian matrices is not Hermitian. However we may complexify this space by
writing a generic matrix as the sum of a real part $A$ and an imaginary part
$iB$,  $A$ and $B$ being Hermitian. In this way we find that:

\begin{proposition}
The complexification of the algebra of observables allows us to write an
associative product of operators $A=A_{1}+iA_{2}$, where $A_{1}$ and $A_{2}$
are real  Hermitian. We shall denote by $\mathcal{A}$ the corresponding
associative algebra. 
\end{proposition}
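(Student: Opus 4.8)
The plan is to exhibit $\A$ concretely as the full algebra $\B(\Hil)$ of operators on the Hilbert space (the matrix algebra $M_n(\C)$ when $\dim\Hil=n<\infty$), equipped with its usual product given by composition of maps, and then to recognise this algebra as the complexification of the real vector space of Hermitian operators. First I would recall why the Hermitian operators by themselves do not support an associative product: if $A^{*}=A$ and $B^{*}=B$ then $(AB)^{*}=B^{*}A^{*}=BA$, which equals $AB$ only when $[A,B]=0$, so the set of Hermitian operators is merely a real vector space --- in fact a real Jordan--Lie algebra under the anticommutator and the commutator --- and not an associative algebra, and to obtain one we are forced to enlarge it.

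Second I would prove the decomposition together with its uniqueness. For an arbitrary operator $T$ set $A_{1}=\tfrac{1}{2}(T+T^{*})$ and $A_{2}=\tfrac{1}{2i}(T-T^{*})$; then $A_{1}^{*}=A_{1}$, $A_{2}^{*}=A_{2}$ and $T=A_{1}+iA_{2}$. If moreover $A_{1}+iA_{2}=B_{1}+iB_{2}$ with $A_{1},A_{2},B_{1},B_{2}$ all Hermitian, taking adjoints gives $A_{1}-iA_{2}=B_{1}-iB_{2}$, and adding and subtracting the two relations yields $A_{1}=B_{1}$ and $A_{2}=B_{2}$. Hence the assignment $(A_{1},A_{2})\mapsto A_{1}+iA_{2}$ is a real-linear isomorphism of the doubled space of Hermitian operators onto the whole operator space, which is precisely the statement that the latter is the complexification of the former; this is the algebra we denote by $\A$.

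Third I would check that the associative product of $\B(\Hil)$ does not lead outside $\A$, which is where whatever content the statement has actually sits. Given $A=A_{1}+iA_{2}$ and $B=B_{1}+iB_{2}$ in $\A$, the product $AB$ is again an operator on $\Hil$, hence by the previous step it can once more be written as $C_{1}+iC_{2}$ with $C_{1},C_{2}$ Hermitian; explicitly $C_{1}=\tfrac{1}{2}(AB+(AB)^{*})=\tfrac{1}{2}\{A_{1},B_{1}\}-\tfrac{1}{2}\{A_{2},B_{2}\}+\tfrac{i}{2}[A_{1},B_{2}]+\tfrac{i}{2}[A_{2},B_{1}]$, which is Hermitian because the anticommutator of two Hermitian operators is Hermitian while their commutator is anti-Hermitian, and similarly for $C_{2}$. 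Associativity is then inherited from composition of maps, so $\A$ is a unital associative algebra (with unit $\uno$) having the Hermitian operators as its self-adjoint part, as claimed. The argument is entirely soft: the only point that needs a moment's care is the uniqueness of the decomposition $T=A_{1}+iA_{2}$, and the conceptual point worth stressing is that one must pass to the full operator algebra --- or, when the observables are unbounded, to a suitable $*$-subalgebra of operators, or directly to the abstract $\C^{*}$-algebra mentioned in the introduction --- rather than attempting to remain within the non-closed set of Hermitian operators.
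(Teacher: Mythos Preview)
Your argument is correct. The paper does not actually give a proof of this proposition: it is stated as an immediate consequence of the preceding sentence, which already notes that the product of two Hermitian matrices need not be Hermitian and that any matrix can be written as $A+iB$ with $A,B$ Hermitian. Your write-up supplies the standard details (existence and uniqueness of the decomposition via $A_1=\tfrac12(T+T^*)$, $A_2=\tfrac1{2i}(T-T^*)$, and closure under composition), which is exactly the content implicit in the paper's remark, so the approaches coincide.
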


Finally we can proceed to define the equations of motion  on this
complexified algebra of observables. It is introduced by means of the 
\textbf{Heisenberg equation}: 
\begin{equation}
i\hbar \frac{d}{dt}A=[A,H]\,,\quad A\in \mathcal{A},  \label{eq:heisenberg}
\end{equation}
where $H$ is called the Hamiltonian of the system we are describing. To take
into account an explicit time-dependence of the observable we may also write
the equation of motion in the form
\begin{equation}
\frac{d}{dt}A=-\frac{i}{h}[A,H]+ \frac{\partial A}{\partial t}\quad A\in 
\mathcal{A}.
\end{equation}

From a formal point of view, this expression is similar to the expression of
Hamilton equation written on the Poisson algebra of classical observables
(i.e. on the algebra of functions representing the classical quantities 
with the structure provided by the Poisson bracket we assume our classical
manifold is endowed with).  This similarity is not casual and turns out to be
very useful in the study of the quantum-classical transition. We shall come
back to this point later on.

\begin{remark}
The equations of motion written in this form are necessarily derivations of
the associative product and can therefore be considered  as ``intrinsically
Hamiltonian''. In the Schr\"{o}dinger picture, however, if the vector field is not
anti-Hermitian, the equation still makes sense, but the dynamics need not be 
K\"ahlerian. To recover a similar treatement, one has to give up the requirement
that the evolution preserves the product structure on the space of
observables. 
\end{remark}

This approach to Quantum Mechanics relies on the non-commutative algebra of
observables, therefore it is instructive to consider a reduction procedure for
non-commutative algebras.

\subsubsection{ An example of reduction in a non-commutative setting}
The  example of reduction procedure in a non-commutative setting that we are
going to discuss
reproduces the Poisson reduction in the ``quantum-classical'' transition and goes
back to the celebrated example of the quantum $SU(2)$  written by Woronowicz
\cite{Woro:1987} and is adapted  from \cite{GrabLanMarVi:1994}.  

We consider the space $S^3\subset \R^4$, identified with the group $SU(2)$
represented in terms of matrices.  The $\star$--algebra $\A$ generated by
matrix elements
is dense in the algebra of  continuous functions on $SU(2)$  and can be
characterized as the ``maximal'' unital commutative $\star$--algebra $\A$,
generated by elements which we can denote as $\alpha, \nu, \alpha^*, \nu^*$ satisfying
$\alpha^* \alpha+\nu^*\nu =1$. This algebra can be generalized and deformed into a
non-commutative one by replacing some relations with the following ones: 
$$
\alpha \alpha^*-\alpha^* \alpha =(2q-q^2)\nu ^* \nu \qquad
\nu ^* \nu - \nu \nu^*=0 
$$
and 
$$
\nu \alpha -\alpha \nu =q\nu \alpha \qquad \nu ^*\alpha -\alpha \nu ^*=q\nu ^* \alpha.
$$

This algebra reduces to the previous commutative one  when $q=0$. In this respect this
situation resembles the one on the phase-space where we consider ``deformation
quantization'' and the role of the parameter $q$ is
played by the Planck constant.  Pursuing this analogy we may consider the formal
product depending on the parameter $q$:
$$
u\star_q v= uv+\sum_n q^n P_n(u,v),
$$
where $P_n$ are such that the product $\star_q$ is associative. 

Since the commutator bracket 
$$
[u,v]_q=u\star_qv-v\star_qu
$$
is a biderivation (as for any associative algebra)  and satisfies the Jacobi
identity  we find that the ``quantum Poisson bracket''  gives a Poisson bracket
when restricted to ``first order elements''
$$
\{ u,v\} =P_1(u,v)-P_1(v,u).
$$
In general, we can write
\begin{equation}
  \label{eq:limit}
\lim_{q\to 0}\frac 1 q[u,v]_q=\{ u, v\}   
\end{equation}

From the defining commutation relations written by Woronowicz we get the
corresponding quadratic Poisson brackets on the matrix elements of $SU(2)$:
$$
\{ \alpha , \bar \alpha\} =2\bar \nu \nu , \quad \{ \nu, \bar \nu\} =0, \quad \{ \nu, \alpha\} =\nu
\alpha, \quad \{ \bar \nu , \alpha \} =\bar \nu \alpha .
$$

Passing to real coordinates, $\alpha=q_2+ip_2$ and $\nu=q_1+ip_1$, we get a purely
imaginary bracket whose imaginary part is the following quadratic Poisson bracket 

\begin{eqnarray*}
\{ p_1, q_1\} &=&0, \,\,\, \, \{ p_1, p_2\} =q_1q_2, \,\,\,\, \{ p_1, q_1\} =-p_1p_2, \\
\{ q_1, p_2\} &=&q_1q_2, \,\,\, \,  \{ q_1, q_2\} =-q_1p_2, \,\,\,\,  \{ p_2, q_2\}=q_1^2+p_1^2. 
\end{eqnarray*}

The functions $q_1^2+q_2^2+p_1^2+p_2^2$ is  a Casimir function for this
Lie algebra.

By performing a standard Poisson bracket reduction  we find a bracket on
$S^3$. If we identify this space with the group $SU(2)$ we get the Lie-Poisson
structure on $SU(2)$: 

The vector field 
$$X=-q_1\pd{p_1}+p_1\pd{q_1}+q_2\pd{p_2}-p_2\pd{q_2}$$ selects a
subalgebra of functions $\F$ by imposing  the condition
$L_X\F=0$. This reduced algebra can be regarded as the algebra generated by 
$$
u=-p_1^2-q_1^2+p_2^2+q_2^2, \qquad \nu=2(p_1p_2+q_1q_2), \qquad
z=2(p_1q_2-q_1p_2),
$$
with brackets
$$
\{ v,u\} =2(1-u)z, \qquad \{ u,z\} =2(1-u)v, \qquad \{ z, v\} =2(1-u)u.
$$

One finds that $u^2+v^2+z^2=1$ so that the reduced space of $SU(2)$ is the
unit sphere $S^2$ and the reduced bracket vanishes  at the North Pole ($u=1,
v=z=0$). 

It may be interesting to notice that the stereographic projection from the
North Pole pulls-back the standard symplectic structure on $\R^2$ onto the one
associated with this one on $S^2-\{ \mathrm{ North \, \,  Pole }\}$.

It is now possible to carry on the reduction at the non-commutative level. We
identify the subalgebra $\A'_q\subset \A_q$ generated by the elements 
$u=\mathbb{I}-2\nu^*\nu =\alpha^*\alpha-\nu^*\nu$, $w=2\nu^*\alpha$ and $w^*=2\alpha^*\nu
$. We have $uu^*+w^*w=\mathbb{I}$ and the algebra $\A'_q$  admits a limit given
by $\A'_0$ generated by the two dimensional sphere $S^2$. The subalgebra
$\A'_q$ can be considered as a quantum sphere.

The quantum Poisson bracket on  $S^2$ is given by
$$
[w,u]=(q^2-2q)(1-u)w, \qquad [w^*, u]=-(q^2-2q)(1-u)w^*,
$$
and 
$$
[w,w^*]=-(2q^2-2q)(1-u)+(4q-6q^2+4q^3-q^4)(1-u)^2.
$$

Passing to the classical limit we find, by setting $v=\mathrm{Re}(w)$,
$z=-\mathrm{Im}(w)$: 
$$
\{ v,u\}=2(1-u)z, \qquad \{ u,z\} =2(1-u)u, \qquad \{ z,v\} =2(1-u)u,
$$
which coincides with the previous reduced Poisson bracket associated with the
vector field $X$. In this case, the reduction procedure commutes with the
``quantum-classical'' limit.

In this same setting it is now possible to consider a ``quantum dynamics'' and
the corresponding ``classical'' one to see how they behave with respect to the
reduction procedure.

On the algebra $\A_q$ we consider the dynamics defined by the Hamiltonian 
$$
H=\frac 12 u=\frac 12 (\mathbb{I}-2\nu^*\nu)=\frac 12 (\alpha^*\alpha-\nu^*\nu).
$$
This choice ensures that our Hamiltonian defines a dynamics on $\A'_q$. The
resulting equations of motion are
$$
[H, \nu]=0, \quad [H, \nu^* ]=0, \quad [H,\alpha]=(q^2-2q)\nu ^*\nu \alpha, \quad 
[H, \alpha^*]=-(q^2-2q)\nu^* \nu \alpha^*,
$$
so that the dynamics written in the exponential form is
$$
U(t)=e^{it\mathrm{ad}_H}
$$
and gives,
$$
\nu(t)=\nu_0 , \qquad \nu^*(t)=\nu^*(0)
$$
$$
\alpha(t)=e^{it(q^2-2q)\nu^*\nu}\alpha_0, \qquad
\alpha^*(t)=e^{-it(q^2-2q)\nu^*\nu}\alpha_0^*.
$$

Going to the ``classical limit'' we  find 
$$
H=\frac 12 ( q_2^2+p_2^2-q_1^2-p_1^2),
$$
with the associated vector field on $S^3$ given \cite{LiMarSpaVit:1993} by 
$$
\Gamma=2(q_1^2+p_1^2)\left ( q_2\pd{p_2}-p_2\pd{q_2} \right ),
$$
the corresponding solutions are given by
$$
q_1(t)=q_1(0), \qquad p_1(t)=p_1(0)
$$
$$
p_2(t)=\cos (2t(q_1^2+p_1^2))p_2(0)+\sin(2t(q_1^2+p_1^2))q_2(0),
$$
$$
q_2(t)=-\sin (2t(q_1^2+p_1^2))p_2(0)+\cos(2t(q_1^2+p_1^2))q_2(0).
$$

If we remember (\ref{eq:limit}),
this flow is actually the limit of the quantum flow when we take the limit of
te deformation parameter $q\to
0$ and hence $q^2/q\to 0$, $q/q\to 1$. Indeed in this case $\nu^*\nu=q_1^2+p_1^2$ and
$\alpha=q_2+ip_2$. As the Hamiltonian was chosen to be an element of $\A'_q$ we get
a reduced dynamics given by
$$
[H, w]=-\frac 12 (q^2-2q)(1-u)w, \qquad
[H, w^*]=-\frac 12 (q^2-2q)(1-u)w^*.
$$

The corresponding solutions for the endomorphism $e^{it\mathrm{ad}_H}$ become
$$
w(t)=e^{-it \frac 12 (q^2-2q)(1-u)}w(0), \qquad
w^*(t)=e^{it \frac 12 (q^2-2q)(1-u)}w^*(0).
$$

Passing to the classical limit we find the corresponding vector field on $\R^3$
tangent to $S^2$
$$
\widetilde \Gamma =(1-u)\left ( z\pd{v}-v\pd{z} \right ),
$$
which is the reduced dynamics
\begin{eqnarray}
\frac {du}{dt}&=&0\cr \quad \frac{dv}{dt}&=&2(q_1^2+p_1^2)(q_2p_1-p_2q_1)=(1-u)z,\cr
\frac{dz}{dt}&=&-2(q_1^2+p_1^2)(p_1p_2+q_1q_2)=-(1-u)v.
\end{eqnarray}

By using the stereographic projection $S^2\to \R^2$ given by $(x,y)=\frac
1{1-u}(v,z)$ we find the associated vector field on $\R^2$
$$
\Gamma(x,y)=\frac 2{x^2+y^2+1}\left ( x\pd{y}-y\pd{x} \right ).
$$

This example is very instructive because provides us with an example of reduced
quantum dynamics that goes onto the corresponding reduced classical
dynamics, i.e. reduction ``commutes'' with ``dequantization''. Further details
can be found in \cite{GrabLanMarVi:1994}. 

\subsubsection{Example: deformed oscillators}

Another instance of a non-commutative algebra reduction is provided by the case
of the deformed harmonic oscillator.
Let us start thus by analyzing the case of deformed harmonic oscillators
described in the Heisenberg picture. By including the deformation parameter in
the picture we can deal with several situations at the same time, as we are
going to see.

We consider a complex vector space $V$ generated by $a, a^+$. Out of $V$ we
construct the associative tensorial algebra $\A=\C\oplus V\oplus (V\otimes V)\oplus (V\otimes V\otimes V)\oplus \cdots  $. A
dynamics on $V$, say
$$
\frac d {dt}a=-i\omega a, \quad \frac d{dt}a^+=i\omega a^+
$$
defines a dynamics on  $\A$  by extending it by using the Leibniz rule with
respect to the tensor product.

A  bilateral ideal $I_{r,q}$ of $\A$, generated by the relation $a^+a-qaa^++r=0$
, i.e. the most general element of $I_{r,q}$ has the form $A(a^+a-qaa^++r)B$,
with $A, B\in \A$, is also invariant under the previously defined equations of
motion. It follows then that the dynamics defines a derivation, a ``reduced
dynamics''  on the quotient algebra $\A_{r,q}=\A/I_{r,q}$. When $q=1$ and $r=0$
the dynamics becomes  a dynamics on a commuting algebra and therefore can be
considered to be a classical dynamics.  When $q=1$ and $r=\hbar$  we get back the
standard quantum dynamics of the harmonic oscillator. If we consider $r$ to be
a function of the ``number operator'' defined as $n=a^+a$ we obtain many of the
proposed deformations of the harmonic oscillator existing in the literature. In
particular, these 
deformations have been applied to the description of the magnetic dipole
\cite{LopManMar:1997}. It is clear  now that this reduction procedure may be
carried over to any realization  or 
representation of the abstract algebra and the corresponding ideal
$I_{r,q}$. In this example it is important that the starting dynamics is
linear. The extension to the universal tensorial algebra gives a kind of
abstract universal harmonic oscillator. The bilateral ideal we choose to
quotient the tensor algebra is responsible for the physical identification of
variables and may arise from a specific realization  of the tensor algebra in
terms of functions or operators.

\subsection{Ehrenfest formalism}

\subsubsection{The formalism}
This picture of Quantum Mechanics is not widely known but it arises in connection 
with the so called Ehrenfest theorem  which may be seen from the point of view
of $\star$--products  on phase space (see \cite{EMS:2004}). Some aspects of this picture have
been considered by Weinberg \cite{Wein:1989}  and more generally appear in the
geometrical formulation of Quantum Mechanics
\cite{CirManPizzo:1990I,CirManPizzo:1990II,CirManPizzo:1991, CirManPizzo:1994}.

We saw above how  Schr\"odinger picture assumes  as a starting point the Hilbert space of
states and derive the observable as real operators acting on this space of
states. The Heisenberg picture starts from the observables, enlarged by means
of complexification into a $\C^*$--algebra and derives the states as  positive
normalized linear functionals on the algebra of observables. In  the Ehrenfest
picture both spaces are considered jointly to define quadratic functions as
\begin{equation}
  \label{eq:func}
  f_A(\psi)=\frac 12 \langle \psi, A\psi\rangle. 
\end{equation}
In this way all operators are transformed into quadratic functions which are
real valued when the operators are Hermitian. The main advantage of this
picture relies on the fact that we can define a Poisson bracket on the space of
quadratic functions by setting
\begin{equation}
  \label{eq:poissonbr}
  \{ f_A, f_B\} :=if_{[A,B]},
\end{equation}
where $[A,B]$ stands for the commutator on the space of operators. By
introducing an orthonormal basis in $\Hil$, say $\{  \psi_k \} $, we may write
the function $f_A$ as
$$
f_A(\psi)=\frac 12\sum_{jk} c_jc_k^*\langle \psi_j, A\psi_k\rangle, \qquad \psi =\sum_k c_k
\psi_k 
$$
and the Poisson bracket then becomes
$$
\{ f_A, f_B\} =i\sum_k\left (
  \pdc{f_A}{c_k}\pdc{f_B}{c_k^*}-\pdc{f_A}{c^*_k}\pdc{f_B}{c_k}  \right
). 
$$

This bracket can be used to write the equations of motion in the form
$$
i\frac{df_A}{dt}=\{ f_H, f_A\}, 
$$
where $f_H$ is the function associated to the Hamiltonian operator.

While this way of writing the dynamics is very satisfactory because allows us
to write the equations of motion in a ``classical way'', one has lost the
associative  product of operators. Indeed, the point-wise product (somehow a
natural one for the functions defined on a real differential manifold) of two
quadratic functions will not be 
quadratic but a quartic function. To recover the associative product we can,
however, get inspiration from the definition of the Poisson bracket
(\ref{eq:poissonbr}) and introduce
\begin{equation}
  \label{eq:assoc}
  (f_A\star f_B)(\psi):=f_{AB}(\psi)=\frac 12 \langle \psi, AB \psi\rangle. 
\end{equation}

By inserting a resolution of the identity $\sum_j |\psi_j\rangle \langle \psi_j|
=\mathbb{I}$ (since there is a numerable basis for $\Hil$) in
between the two operators in $AB$,  say
$$
\langle \psi, A\sum_j|\psi_j\rangle \langle \psi_j|B\psi \rangle, 
$$
and writing the expression of $\psi$ in terms of the basis elements $\psi 
=\sum_kc_k\psi_k $ we find a product
$$
 (f_A\star f_B)(\psi)=\sum_{jkl}c_j c^*_l\langle \psi_j, A\psi_k\rangle \langle \psi_k,B\psi_l\rangle ,
$$ 
which reproduces the associative product of operators but now it is not
point-wise anymore. 

As a matter of fact the Poisson bracket defines derivations for this product,
i.e.
$$
\{ f_A, f_B\star f_C\} =\{ f_A, f_B\} \star f_C+f_B\star \{ f_A, f_C\} \quad \forall f_A, f_B, f_C.
$$
Therefore it is an instance of what Dirac calls a quantum Poisson bracket
\cite{Dirac:book}. In the literature it is known as a Lie-Jordan bracket
\cite{Emch,Lands:book}. 

Using both products, the Ehrenfest picture becomes equivalent to Schr\"odinger
and Heisenberg ones.

Let us consider now how the expressions of the products are written in terms of
a different basis, namely the basis of eigenstates of the position operator $Q$
or the momentum operator $P$. We have thus two basis $\{ | q\rangle \}$ and $\{ | p\rangle \} $
satisfying $Q| q\rangle =q |q\rangle $ and $P|p\rangle =p|p\rangle $ and 
$$
\int_{-\infty}^\infty |q\rangle dq\langle q | =\mathbb{I}=\int_{-\infty}^\infty |p\rangle  dp\langle p | .
$$

Now the matrix elements $A_{kj}=\langle\psi_j, A\psi_k\rangle $ of the operators in the
definition of the $\star$ product above become 
$$
A(q', q)=\langle q', Aq\rangle \mathrm{\quad  or \quad } A(p', p)=\langle p', Ap\rangle, 
$$
and the sum is replaced by an integral:

\begin{equation}
  \label{eq:star2}
(f_A\star f_B)(\psi)=\int dq dq' dq'' c(q'') c^*(q')A(q'', q) B(q,q') .  
\end{equation}

Thus  this  is a product of functions defined on $\R^n\times \R^n$ or $(\R^n)^*\times
(\R^n)^*$ , i.e. two copies of the configuration space or two copies of the
momentum space. Following an idea of Dirac \cite{Dir:1945} one may get
functions on $\R^n\times (\R^n)^*$ by using eigenstates of the position operator on
the left and  eigenstates of the momentum operator on the right:
$$
A_l(q,p)=\langle q, Ap\rangle e^{-\frac i\hbar qp},
$$
or also interchanging the roles of position and momentum:
$$
A_l(p,q)=\langle p, Aq\rangle e^{\frac i\hbar qp}.
$$

Without elaborating much on these aspects (we refer to
\cite{ChaErMarMorMuSim:2005} for details) we
simply state that the $\star$--product we have defined, when considered on phase
space, becomes the standard Moyal product.

It is now clear that we may consider the reduction procedure in terms of
non-commutative algebras when we consider the $\star$--product. We shall give a
simple example where from a $\star$--product on $\R^4$ we get by means of a
reduction procedure a $\star$--product on the dual of the Lie algebra of
$SU(2)$. Further details connected with their use in non-commutative geometry
can be found in \cite{GraLizMarVi:2002}.

\subsubsection{Example: Star products on $\mathfrak{su}(2)$}

 We are going to show how it is possible to define star products on
spaces such as $\mathfrak{su}(2)$ by using the reduction of the Moyal star
product defined on a larger space ($\R^4$ in this case). 

Let us then consider the coordinates $\{ q_1,q_2,p_1,p_2\} $ for $\R^4$, $\{
x,y,w\}$ for $\mathfrak{su}(2)$  and  the
mapping $\pi:\R^4\to \R^3\sim \mathfrak{su}(2)$  defined as:

\begin{eqnarray*}
f_1(q_1,q_2,p_1,p_2)&=&\pi^*(x)=\frac 12 (q_1q_2+p_1p_2) \\
f_2(q_1,q_2,p_1,p_2)&=&\pi^*(y)=\frac 12 
(q_1p_2-q_2p_1) \\
 f_3(q_1,q_2,p_1,p_2)&=&\pi ^*w=\frac 14 (q_1^2+p_1^2-q_2^2-p_2^2) 
\end{eqnarray*}

It is useful to consider also the pull-back of the Casimir function of
$\mathfrak{su}(2)$, $\mathcal{C}=\frac 12 (x^2+y^2+w^2)$, which becomes  
$$
\pi^*\mathcal{C}=\frac 1{32}(p_1^2+q_1^2+p_2^2+q_2^2)^2.
$$

To define a reduced star product on $\mathfrak{su}^*(2)$  we
consider the Moyal star product on the functions of $\R^4$, and select a
$\star$--subalgebra isomorphic to the $\star$--algebra of $\mathfrak{su}^*(2)$. To
identify this subalgebra we need derivations of the $\star$--product that
annihilate the algebra we are studying. We look then for a derivation $H$ which is
a derivation of both the point-wise algebra and the $\star$--algebra, to ensure that
reduction commutes with the ``classical limit''.  The commutative
point-wise product condition will identify the quotient manifold, while the
condition on the $\star$--product identifies a star product on functions
defined on the quotient. We consider thus a vector field $H$ on $\R^4$
satisfying 
$$
L_H\pi^*x=0=L_H\pi^*y =L_H\pi^*w.
$$

This condition characterizes the point-wise subalgebra of functions of $\R^4$
which are projectable on functions of $\R^3$. Such a vector field 
can be taken to be the Hamiltonian vector field
associated to the Casimir function $\pi^*\mathcal{C}$. It is simple to see that 
the Poisson subalgebra generated by the functions $\{ \pi^*x, \pi^*y, \pi^*w,
f_H\} $ where $f_H=q_1^2+q_2^2+p_1^2+p_2^2$ is the Poisson commutant of the
function $f_H$ (see \cite{GraLizMarVi:2002}). And this set is an involutive
Moyal subalgebra when we consider  the Moyal product on them, i.e. for any
functions $F, G$
$$
\{ f_H, F\} =0=\{ f_H, G\} \Rightarrow \{ f_H, F\star G\} =0.
$$

The star product on $\mathfrak{su}(2)$ is then defined as:
$$
\pi^*(F\star_{\mathfrak{su(2)}}G)=\pi^*F\star \pi^*G.
$$

As an example we can consider the product:
\begin{eqnarray*}
x_j\star_{\mathfrak{su(2)}}f(x_i)=&\\
&\left ( x_j-\frac
  {i\theta}2\epsilon_{jlm}x_l\pd{x_m}-\frac{\theta^2}{8} \left (\left
      (1+x_k\pd{x_k}\right )\pd{x_j}-\frac 12 x_j\pd{x_k}\pd{x_k}\right )\right
)f(x_i). 
\end{eqnarray*}

The same procedure may be applied to obtain a reduced star product for all
three dimensional Lie algebras (see \cite{GraLizMarVi:2002}) and to deal with a
non-commutative differential calculus \cite{MarViZam:2006}.

\section{The complex projective space as a reduction of the Hilbert 
space} 

\subsection{Geometric Quantum Mechanics}

It is possible to show that the various pictures we have presented so far
can be given an unified treatment. To this aim it is convenient to consider
a realification of the Hilbert space and to deal with our different pictures
from a geometric perspective.

Let us start by considering again the complex Hilbert space $\Hil$ which contains the
set of states of our quantum system. Originally it is considered to be a
complex vector space, but we can also look at it as a real vector space by
considering the real and imaginary parts of the vectors $\Hil\ni |\psi\rangle
=(\psi_R, \psi_I)$. The Hermitian structure of $\Hil$ is then
encoded in two real tensors, one symmetric and one skew-symmetric; which
together with the complex structure provide us with a K\"ahler structure.
Let us first discuss this point in some detail.

We consider therefore $\Hil_\R$ with the structure of a K\"ahler manifold $(%
\Hil_\R, J,g,\omega)$, i.e. a complex structure $J:T\Hil_\R\to T\Hil_\R$, a
Riemannian metric $g$ and a symplectic form $\omega$ . First of all, we are
going to make use of the linear structure of the Hilbert space (encoded in
the dilation vector field $\Delta$) to identify the tangent vectors at any
point of $\Hil_\R$. 
In this way we can consider the Hermitian structure on $\Hil_\R$ as an
Hermitian tensor on $T\Hil_\R$. With every vector we can associate a vector
field 
$$
X_\psi:\phi \to (\phi, \psi) \,.
$$

Therefore, the Hermitian tensor, denoted in the same way as the scalar
product is 
$$
\langle X_{\psi_1}, X_{\psi_2}\rangle =\langle \psi_1, \psi_2\rangle \,.
$$

Fixing an orthonormal basis $\{ |e_k\rangle \} $ of the Hilbert space allows
us to identify this product with the canonical Hermitian product of $\C^n$: 
$$
\langle \psi_1, \psi_2\rangle =\sum_k\langle \psi_1, e_k\rangle \langle e_k,\psi_2\rangle \,.
$$
The Hilbert space becomes then identified with $\C^n$. As a result, 
the group of unitary transformations on $\Hil$ becomes identified as the
group $U(n, \C)$, its Lie algebra $\mathfrak{u}(\Hil)$ with $\mathfrak{u}(n, %
\C)$ and so on.

The choice of the basis also allows us to introduce coordinates for the
realified structure: 
$$
\langle e_k, \psi\rangle =(q_k+ip_k)(\psi), 
$$
and write the geometrical objects introduced above as: 
$$
J=\pd{{p_k}}\otimes dq_k-\pd{q_k}\otimes dp_k, \quad g=dq_k\otimes
dq_k+dp_k\otimes dp_k \quad \omega=dq_k\land dp_k \,.
$$

If we combine them in complex coordinates $z_k=q_k+ip_k$ we can write the Hermitian
structure in a simple way 
$$
h=d\bar z_k\otimes dz_k 
$$

The space of observables (i.e. of Hermitian operators acting on $\Hil$)
is identified with the dual $\mathfrak{u}^{\ast }(\Hil)$ of the real
Lie algebra $\mathfrak{u}(\Hil)$, by means of the scalar product existing on
the Lie algebra and using the fact that the multiplication of an Hermitian
matrix by the imaginary unit gives an element in the Lie algebra.  Then we
get 
$$
A(T)=\frac{i}{2}\Tr AT \qquad A\in \u^*, \quad T\in  \u \,.
$$

The product given by the trace allows us to establish an isomorphism
between $\mathfrak{u}(\Hil)$ and $\mathfrak{u}^{\ast }(\Hil)$ and identifies
the adjoint and the coadjoint action of the Lie group $U(\Hil)$.

Under the previous isomorphism, $\mathfrak{u}^{\ast }(\Hil)$ becomes a Lie
algebra with Lie bracket defined by 
$$
[A,B]_{-}=i[A,B]=i(AB-BA)\,.
$$

Moreover, we can also define a  scalar product on $\u^*$ , given by: 
$$
\langle A,B\rangle =\frac{1}{2}\Tr AB,
$$
which turns the vector space into a real Hilbert space.

The identification of vectors and covectors allows to write the isomorphism
from $\mathfrak{u}^*$ to $\mathfrak{u}$. The metric becomes then 
$$
\langle \hat A, \hat B\rangle_{\mathfrak{u}}=\frac 12 \Tr AB .
$$

We can also associate complex valued functions to linear operators $A\in %
\mathfrak{gl}(\Hil)$ as we have seen in the Ehrenfest picture,
$$
\mathfrak{gl}(\Hil)\ni A\mapsto f_A=\frac 12 \langle \psi, A\psi\rangle_{\Hil%
}. 
$$
We can use this mapping in the dual $\mathfrak{u}(\Hil)$. The way to do
it is to consider the complexification of $\mathfrak{u}(\Hil)$ and then
consider general linear transformations (i.e. elements of $\mathfrak{gl}(\Hil%
)$) and associate with them the complex valued functions we saw above.

Now, by using the contravariant form $G+i\Omega$ of the Hermitian tensor 
given by: 
$$
G+i\Omega=\frac{\partial}{\partial{q_k}}\otimes \frac{\partial}{%
\partial{q_k}}+ \frac{\partial}{\partial{p_k}}\otimes \frac{\partial}{%
\partial{p_k}}+ i\frac{\partial}{\partial{q_k}}\land \frac{\partial}{%
\partial{p_k}}, 
$$

it is possible to define a bracket (see for instance \cite{CaC-GMar:2007})
$$
\{ f,h\} _{\Hil}=\{ f,h\}_{g}+i\{ f,h\} _{\omega} 
$$

In particular, for quadratic functions we have 
$$
\{ f_A,f_B\} =f_ {AB+BA}=2f_{A\circ B} \quad \{
f_A,f_B\}_{\omega}=-if_{AB-BA} 
$$

Thus in this way we can define a tensorial version of the symmetric
product on the space of Hermitian matrices which defines a Jordan
algebra,along with the Lie product given by the commutator.

For Hermitian operators we find: 
$$
\mathrm{grad}f_A=\widetilde A \quad \mathrm{Ham} f_A=\widetilde {iA}, 
$$
where the vector fields associated with operators are defined by: 
$$
\widetilde A:\Hil_{\R}\to T\Hil_{R} \quad \psi\mapsto (\psi, A\psi) ,
$$
$$
\widetilde{iA}:\Hil_\R\to T\Hil_{R} \quad \psi\mapsto (\psi, JA\psi) .
$$

The  action of $U(\Hil)$ on $\Hil$ defines a momentum map 
$$
\mu:\Hil\to \mathfrak{u}^*(\Hil). 
$$
The fundamental vector fields associated with the operator $A$ is given by $%
\widetilde{iA}$ and the momentum map is such that 
$$
\mu(\psi)(\widetilde{iA})=\frac 12 \langle \psi, A\psi\rangle_{\Hil} 
$$
Thus we can write the momentum map from $\Hil_\R$ to $\mathfrak{u}^*(\Hil)$
as 
$$
\mu(\psi)=|\psi\rangle \langle \psi| 
$$

For Hermitian operators, linear functions on $\mathfrak{u}^*(\Hil)$ (i.e.
elements of the unitary algebra) are pulled-back to quadratic functions.
Therefore $\mu$ provides a symplectic realization of the Poisson manifold $%
\mathfrak{u}^*(\Hil)$ and the Ehrenfest picture is nothing but the
``pullback'' of the Heisenberg picture to the symplectic manifold $\Hil_\R$.
If we denote by $\hat A$ the linear function on $\mathfrak{u}^*(\Hil)$
associated with the element $-iA\in \mathfrak{u}(\Hil)$, we see immediately
that the momentum map relates the contravariant tensors $G$ and $\Omega$
(defined on $\Hil_\R$) with the linear contravariant tensors $R$ and $\Lambda
$ on $\mathfrak{u}^*(\Hil)$ corresponding to its Lie-Jordan brackets.

We have then the obvious definitions: 
$$
R(\xi)(\hat A, \hat B)=\langle \xi, [A,B]_+\rangle _{\mathfrak{u}^*}, \qquad
\Lambda (\xi)(\hat A, \hat B)=\langle \xi, [A,B]_-\rangle _{\mathfrak{u}^*}, 
$$
and together they form the complex tensor 
$$
(R+i\Lambda)(\xi)(\hat A, \hat B)=2\langle \xi, AB\rangle_{\mathfrak{u}^*}=%
\Tr \xi AB . 
$$
Clearly, 
$$
G(\mu^*\hat A, \mu^*\hat B)+i\Omega(\mu^*\hat A, \mu^*\hat B)=\mu^*(R(\hat
A, \hat B)+i\Lambda(\hat A, \hat B)). 
$$

Thus the momentum map provides also an unified view of the Schr\"odinger,
Ehrenfest and Heisenberg pictures. Clearly the Schr\"odinger vector field
being associated with the Hamiltonian function $\mu^*(\hat A)$ is just the usual
equation written as 
$$
i\frac d{dt}\psi=A\psi \,.
$$

The geometrical formulation of Quantum Mechanics we have presented shows that
the reduction procedure in the quantum setting may use most of the procedures
available from the classical setting. Of course now care must be used to deal
with the reduction of the nonlocal product. Again we may find that a reduced
$\star$--algebra need not be associated with a product defined on functions defined
on some ``quotient'' manifold. Thus whether or not the reduction procedure
commutes with the quantum-classical transition has to be considered an open
problem. 

\subsection{Pure states: the complex projective space}
The consideration that the probabilistic interpretation of Quantum Mechanics
requires state vectors to be normalized to one, i.e. $\langle \psi ,\psi
\rangle =1$, and that the probability density $\psi ^{\ast }(x,t)\psi
(x,t)$ is invariant under multiplication by a  phase, i.e. replacing $%
\psi $ with $e^{i\varphi} \psi $ does not alter the probabilistic
interpretation, imply that the carrier space of
``physical states'' is really the complex projective space $\mathbb{P}\Hil$
or the ray space $\mathcal{RH}$. If one considers the natural projection
from $\Hil-\{0\}$ to $\mathcal{RH}$: 
$$
\Hil-\{0\}\ni \psi \mapsto \pi (\psi )=\rho _{\psi }=\frac{|\psi \rangle
\langle \psi |}{\langle \psi ,\psi \rangle },
$$
one discovers that $\Hil-\{0\}$  can be considered as a
principal bundle over $\mathcal{RH}$ with group structure $\C_{0}=S^{1}\times \R_{+}$. The
infinitesimal generators of this group action, i.e. the corresponding
fundamental vector fields are $\Delta $ and $J(\Delta )$ (remember that $%
\Delta $ was the dilation vector field and $J$ the tensor representing the
complex structure of $\Hil$).

From the action of $U(\Hil)$ on $\Hil$, that we can write as 
$$
\psi \mapsto g\psi \quad g\in U(\Hil),
$$
we can introduce a ``projected'' action on $\mathcal{RH}$ given by 
$$
\rho\mapsto g\rho g^{-1}. 
$$
This action is transitive. In the particular case of the unitary evolution
operator (i.e. the one-parameter group of unitary transformations associated
with the Schr\"odinger equation (\ref{eq:schrodinger}), where we assume for
simplicity that the Hamiltonian does not depend on time), the evolution
written in terms of the elements of $\mathcal{RH}$ is written as 
$$
\rho(t)=\exp \left ( -\frac {iHt}{\hbar}\right )\rho(0)\exp \left ( \frac
{iHt}{\hbar}\right ). 
$$

This expression provides a solution of the von Neumann equation 
\begin{equation}
i\hbar \frac{d\rho }{dt}=[H,\rho ].  \label{eq:vonneumann}
\end{equation}

Thus this equation becomes another instance of the quantum equations of motion ,
in this case defined on $\mathcal{RH}$.

According to our previous treatment of the momentum map of the unitary
group, the ray space $\mathcal{RH}$ can be identified with a symplectic
leaf of $\mathfrak{u}^*(\Hil)$ passing through a rank-one projector 
$$\frac{|\psi \rangle \langle \psi|}{\langle \psi, \psi \rangle} \in \mathfrak{u}^*(\Hil)\,.$$
 Here we are interested in considering it as the complex projective
space obtained as a reduction of $\Hil-\{ 0\} $.

Now we would like to transfer the geometric objects we introduced on $%
\mathfrak{u}^*(\Hil)$ onto the ray space $\mathcal{RH}$. In particular the
structures we defined on the set of quadratic functions. As we are
interested now in functions which are projectable with respect to $\Delta$
and $J(\Delta)$, we are going to consider the functions associated with the
expectation values of the observables, i.e. 
$$
e_A(\psi)=\frac{\langle \psi, A\psi\rangle }{\langle \psi,\psi\rangle } 
$$

\begin{lemma}
These functions are invariant with respect the actions of $\Delta $ and $%
J(\Delta )$.
\end{lemma}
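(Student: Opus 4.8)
The plan is to verify the invariance directly by exhibiting the flows of $\Delta$ and $J(\Delta)$ and checking that $e_A$ is constant along each of them; equivalently, one computes the two Lie derivatives and sees that they vanish. First I would recall, from the construction set up just above, that under the identification of $T_\psi\Hil_\R$ with $\Hil$ supplied by the linear structure the vector field $\Delta$ is the radial field $\Delta(\psi)=\psi$, while $J(\Delta)(\psi)=J\psi$, i.e. multiplication of $\psi$ by the imaginary unit. Hence the flow of $\Delta$ is $\psi\mapsto e^{t}\psi$ (the $\R_+$ factor of the structure group $\C_0=S^{1}\times\R_+$) and the flow of $J(\Delta)$ is $\psi\mapsto e^{it}\psi$ (the $S^{1}$ factor).

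Next I would observe that both the numerator $N(\psi)=\langle\psi,A\psi\rangle$ and the denominator $D(\psi)=\langle\psi,\psi\rangle$ are Hermitian quadratic forms, so that for every $\lambda\in\C$ one has $N(\lambda\psi)=|\lambda|^{2}N(\psi)$ and $D(\lambda\psi)=|\lambda|^{2}D(\psi)$ by sesquilinearity. Therefore $e_A(\lambda\psi)=N(\lambda\psi)/D(\lambda\psi)=e_A(\psi)$ for every $\lambda\in\C_0$; taking $\lambda=e^{t}$ and $\lambda=e^{it}$ shows that $e_A$ is constant along the two flows above, that is $L_\Delta e_A=0$ and $L_{J(\Delta)}e_A=0$. (If one prefers an infinitesimal computation, the same conclusion follows from $L_\Delta N=2N$, $L_\Delta D=2D$ together with the quotient rule, and from $L_{J(\Delta)}N=L_{J(\Delta)}D=0$.)

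The conceptual way to see the same thing — and the reason this lemma is exactly what is needed — is that $e_A(\psi)=\Tr(\rho_\psi A)$ with $\rho_\psi=|\psi\rangle\langle\psi|/\langle\psi,\psi\rangle=\pi(\psi)$, so $e_A$ factors through the bundle projection $\pi:\Hil-\{0\}\to\mathcal{RH}$ whose fibres are precisely the orbits of the $\C_0$-action generated by $\Delta$ and $J(\Delta)$; invariance is then immediate. I do not expect a genuine obstacle here: the only point requiring care is to keep consistent the identification of tangent vectors with elements of $\Hil$ (and the sesquilinearity convention), so that the flows of $\Delta$ and $J(\Delta)$ are correctly matched with the $\R_+$- and $S^{1}$-parts of $\C_0$. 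Once that is in place, the equal homogeneity of numerator and denominator does all the work.
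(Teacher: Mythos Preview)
Your argument is correct and complete. It differs slightly from the paper's proof, however. For the dilation part both you and the paper simply invoke homogeneity. For $J(\Delta)$, you argue directly via sesquilinearity that the phase flow $\psi\mapsto e^{it}\psi$ leaves both numerator and denominator unchanged. The paper instead observes that $J(\Delta)$ is the Hamiltonian vector field (for the canonical symplectic form) of $f_I(\psi)=\tfrac12\langle\psi,I\psi\rangle$, and then uses the bracket formula $\{f_I,f_A\}_\omega=-if_{[I,A]}=0$ to conclude that $J(\Delta)$ annihilates every quadratic function $f_A$; since $e_A=f_A/f_I$, invariance follows. Your route is more elementary and self-contained, while the paper's route ties the statement to the Poisson/Ehrenfest machinery developed earlier and makes transparent why this is an instance of Poisson reduction (the observation immediately following the lemma). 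Either approach is perfectly adequate here.
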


\begin{proof}
  The invariance with respect to dilations is obvious. To prove the invariance
  under $J(\Delta)$, it is useful to notice that this vector field is the Hamiltonian
  vector field (with respect the canonical symplectic structure) associated
  with the function 
$$f_I(\psi)=\frac 12 \langle \psi, I\psi\rangle \,.$$
 Hence it commutes with
  any quadratic function associated to an operator, because any operator
  commutes with the identity.
\end{proof}

This observation also shows that this example  may be considered to be the
Poisson reduction associated to the ideal generated by the functions $\langle \psi, \psi
\rangle -1$. The associated first class functions in the family of quadratic
ones are exactly given by $f_A(\psi)=\langle \psi, A\psi\rangle $, with $A$ a generic operator. In
this way the complex projective space provides an instance of Poisson
reduction. 

It is important to remark that in this construction we are not passing
through the submanifold $\Hil-\{ 0\} \supset \Sigma=\{ \psi\in \Hil-\{ 0\} |
\langle \psi,\psi\rangle =1\} $, as it is usually done in the definition of
the geometric description of the projective space. The reason
is that if we want to have the freedom to consider alternative Hermitian structures
on $\Hil$ (this would be the analog of the bi-Hamiltonian structures for classical
mechanical systems) we can not privilege a given one with respect to others. If
we change the  Hermitian structure, the submanifold $\Sigma$ would be different
while the corresponding projective space, as  a manifold, would not change .

It is now simple to understand why the tensors $G$ and $\Omega$, associated
with the Hermitian structure, will not be projectable objects. In spite of
this, we can turn them into projectable objects by introducing a conformal
factor: 
$$
\widetilde G=\langle \psi , \psi \rangle G,\qquad \widetilde \Omega=\langle
\psi, \psi \rangle \Omega. 
$$
But with this change, $\widetilde \Omega$ is no longer representing a
Poisson structure, but a Jacobi one, whose defining vector field is the
Hamiltonian vector field associated (with respect to the symplectic
structure) with the function $\frac 12 \langle \psi, \psi \rangle $, via $G$. The reduction of
this Jacobi algebra gives rise to the expected Poisson 
structure on the ray space $\mathcal{RH}$ \cite{IbdLeMar:1997}.

It is interesting to look at the particular form of these tensors when we
introduce adapted coordinates:

\begin{itemize}
\item  In complex coordinates the expression is 
$$\sum_{k}(z_{k}^{\ast
}z_{k})\sum_{l}\,\pd{\bar{z}_{l}}\otimes \pd{z_{l}}.$$

\item  If we use real coordinates, the principal bundle we have mentioned on
  the space of pure states will admit a connection one-form (which is
Hermitian) given by 
$$
\theta =\frac{\langle \psi ,d\psi \rangle }{\langle \psi ,\psi \rangle },
$$
which satisfies $\theta (J(\Delta ))=i$ and $\theta (\Delta )=1$; and
takes the form 
$$
\frac{(q-ip)d(q+ip)}{q^{2}+p^{2}}=\frac{qdq+pdp}{q^{2}+p^{2}}+i\frac{qdp-pdq%
}{q^{2}+p^{2}}
$$
with 
$$
\Delta =q\frac{\partial }{\partial q}+p\frac{\partial }{\partial p},\qquad
J(\Delta )=q\frac{\partial }{\partial p}-p\frac{\partial }{\partial q}.
$$
\end{itemize}

Once we have found symmetric and skew-symmetric tensors on the ray space we
can invert them and find the associated covariant form. When pulled-back to $%
\Hil-\{ 0\} $ these tensors may be represented by 
\begin{equation}  \label{eq:connection}
\frac{\langle d\psi, d\psi\rangle}{\langle \psi, \psi\rangle}-\frac{\langle
\psi, d\psi\rangle\langle d\psi, \psi\rangle}{\langle \psi, \psi\rangle^2}.
\end{equation}

This presentation shows very clearly that by changing the Hermitian structure
we also change the connection one form $\theta$, while $\Delta$ and $J$ remain unchanged if
we do not change the complex structure. The curvature two form of this
connection represents a symplectic structure on the ray space $\mathcal{RH}$
and is usually considered as a starting point to deal with geometric phases.

We consider the differential $d_J$ associated with the $(1,1)$--tensor $J$,
defined (see \cite{MoFeVecMarRu:1991,RoSpaVi:2006}) on functions and one--forms
as 
$$
(d_Jf)(X)=df(JX) \qquad d_J\beta (X,Y)=L_{JX}\beta(Y)-L_{JY}\beta(X)-\beta((J[X,Y]);
$$
for $\beta$ a one-form and $X$ and $Y$ vector fields and extended naturally to
higher order forms. With this 
we find that the K\"ahlerian two form $\omega$ may be written as 
$$
\omega=dd_J\log (\scalar{\psi, \psi}),
$$
while the translational invariant two form on $\Hil$ would be $dd_J(\scalar{\psi, \psi})$.

In this expression we see that $\log(\scalar{\psi, \psi})$ represents the K\"ahler
potential on $\Hil_\R$ and depends on the chosen Hermitian structure. It is not
projectable on the ray-space, while the two form we associate with it will be
the pull-back of a two form on $\mathcal{RH}$.

Before closing this section we notice that by taking convex combinations of
our pure states, rank-one projectors, we can generate the whole set of
density states. If, on the other hand, we consider real combinations, we
generate the full $\mathfrak{u}^*(\Hil)$ space. Therefore it is possible to
derive Heisenberg picture from the von Neumann description.

From our description in terms of geometrical Quantum Mechanics it should be
clear that the equivalence of the various pictures is naturally presented in
our generalized reduction procedure.

Another comment is in order. The reduction procedures within  Quantum Mechanics
are most effective when they are formulated in a  way such that the classical
limit may be naturally considered in the chosen formalism. We believe that this
may be considered as an indication that Quantum Mechanics should be formulated
in a way that in some form it incorporates the so called ``correspondence principle''.


\begin{thebibliography}{10}

\bibitem{AGM:1998}
V.~Aldaya, J.~Guerrero and G.~Marmo.
\newblock {\em Symmetries in Science X}, chapter Quantization on a {L}ie group:
  higher order polarizations.
\newblock Plenum Press, New York, 1998.
\newblock B.Gruber and M.Ramek,Eds.

\bibitem{AVL:1991}
D.~V. Alekseevskij, A.~M. Vinogradov and V.~V. Lychagin.
\newblock {\em Basic {I}deas and {C}oncepts of {D}ifferential {G}eometry}.
\newblock Number Geometry I in Encyclopaedia of Mathematical Sciences.
  Springer-Verlag, 1991.

\bibitem{AmalLev:1926}
U.~Amaldi and T.~Levi-Civita.
\newblock {\em Lezioni di Meccanica Razionale}.
\newblock N. Zanichelli, Bologna, 1926.

\bibitem{AsIbMar:2005}
M. Asorey, A.  Ibort and G. Marmo.
\newblock Global {T}heory of {Q}uantum {B}oundary {C}onditions and {T}opology
  {C}hange.
\newblock {\em Int. J. Mod. Phys. A}, 20:1001--1026, 2005.


\bibitem{DavanMar:2005}
A.~D' Avanzo and G.~Marmo.
\newblock Reduction and unfolding: the {K}epler problem.
\newblock {\em Int. J. Geom. Meth. Phys.}, 2:83--109, 2005.

\bibitem{DavanMarVal:2005}
A.~D' Avanzo, G.~Marmo and A.~Valentino.
\newblock Reduction and unfolding for quantum systems: the hydrogen atom.
\newblock {\em Int. J. Geom. Meth. Phys}, 2:1043--1062, 2005.

\bibitem{BaMaSkaSter:1980}
A.~P. Balachandran, G.~Marmo, B.~S. Skagerstam and A.~Stern.
\newblock Magnetic monopoles with no strings.
\newblock {\em Nuc Phys B}, 152:385, 1980.

\bibitem{BaMaSkaSter:1983}
A.~P. Balachandran, G.~Marmo, B.~S. Skagerstam and A.~Stern.
\newblock {\em Gauge symmetries and fiber bundles. {A}pplications to particle
  dynamics}.
\newblock Springer Verlag, Berlin, 1983.

\bibitem{BaMaSkaSter:1991}
A.~P. Balachandran, G.~Marmo, B.~S. Skagerstam and A.~Stern.
\newblock {\em Classical topology and quantum states}.
\newblock World Scientific, River Edge NJ, 1991.

\bibitem{CaC-GMar:2007}
J.~F. Cari{\~n}ena, J.~Clemente-Gallardo and G.~Marmo.
\newblock Geometrization of {Q}uantum {M}echanics.
\newblock {\em Theor. Math Phys}, 152(1):894--903, 2007.

\bibitem{CarGrabMar:2000}
J.~F. Cari{\~n}ena, J.~Grabowski and G.~Marmo.
\newblock {\em Lie-Scheffers systems: a geometric approach}.
\newblock Bibliopolis, Napoli, 2000.

\bibitem{CaGraMarMarMunRom:2006}
J.~F. Cari{\~n}ena, X.~Gracia, G.~Marmo, E.~Mart{\'\i}nez,
  M.~Mu{\~n}oz-Lecanda and N.~Rom{\'a}n-Roy.
\newblock Geometric {H}amilton-{J}acobi theory.
\newblock {\em Int. J. Geom. Meth. Phys.}, 3:1417--1458, 2006.

\bibitem{CarIbNArPere:1994}
J.~F. Cari{\~n}ena, A.~Ibort nd~G.~Marmo and A.~Perelomov.
\newblock On the geometry of {L}ie algebras and {P}oisson tensors.
\newblock {\em J. Phys A: Math Gen}, 27:7425--7449, 1994.

\bibitem{ChaErMarMorMuSim:2005}
S.~Chaturvedi, E.~Ercolessi, G.~Marmo, G.~Morandi, N.~Mukunda and R.~Simon.
\newblock Wigner-{W}eyl correspondence in {Q}uantum {M}echanics for continuous
  and discrete time systems: a {D}irac inspired view.
\newblock {\em J. Phys A: Math Gen}, 39:1405--1423, 2005.

\bibitem{ChaErcMarMukSi:2007}
S.~Chaturvedi, E.~Ercolessi, G.~Marmo, N.~Mukunda and R.~Simon.
\newblock Ray-space {R}iccati evolution and geometric phases for n-level
  quantum systems.
\newblock {\em Pramana}, In print.

\bibitem{CirManPizzo:1990I}
R.~Cirelli, A.~Mani{\'a} and L.~Pizzocchero.
\newblock Quantum {M}echanics as an infinite-dimensional {H}amiltonian system
  with uncertainty structure i.
\newblock {\em J. Math Phys}, 31(12):2891--2897, 1990.

\bibitem{CirManPizzo:1990II}
R.~Cirelli, A.~Mani{\'a} and L.~Pizzocchero.
\newblock Quantum mechanics as an infinite-dimensional hamiltonian system with
  uncertainty structure ii.
\newblock {\em J. Math Phys}, 31(12):2898--2903, 1990.

\bibitem{CirManPizzo:1991}
R.~Cirelli, A.~Mani{\'a} and L.~Pizzocchero.
\newblock Quantum phase-space formulation of {S}chr{\"o}dinger mechanics.
\newblock {\em J. Mod Phys A}, 6(12):2133--2146, 1991.

\bibitem{CirManPizzo:1994}
R.~Cirelli, A.~Mani{\'a} and L.~Pizzocchero.
\newblock A functional representation for noncommutative
  $\mathbb{C}^*$--algebras.
\newblock {\em Rev Math Phys}, 6(5):675--697, 1994.

\bibitem{CushBat:1997}
R.~H. Cushman and L.~M. Bates.
\newblock {\em Global aspects of classical integrable systems}.
\newblock Birkh{\"a}user, 1997.

\bibitem{Dirac:1931}
P.~A.~M. Dirac.
\newblock Quantized singularities in the electromagnetic field.
\newblock {\em Proc. Roy. Soc. London A}, 133:60--72, 1931.

\bibitem{Dirac:book}
P.~A.~M. Dirac.
\newblock {\em The principles of {Q}uantum {M}echanics}.
\newblock Clarendon Press, Oxford, 2 edition, 1936.

\bibitem{Dir:1945}
P.~A.~M. Dirac.
\newblock On the analogy between {C}lassical and {Q}uantum {M}echanics.
\newblock {\em Rev. Mod. Phys}, 17:195--199, 1945.

\bibitem{Duis:1980}
J.~J. Duistermaat.
\newblock On global action-angle coordinates.
\newblock {\em Comm. Pure Applied Math.}, 33:687--706, 1980.

\bibitem{Emch}
G.~G. Emch.
\newblock {\em Foundations of 20th century {P}hysics}.
\newblock North Holland, 1984.

\bibitem{EMS:2004}
G.~Esposito, G.~Marmo and G.~Sudarshan.
\newblock {\em From {C}lassical to {Q}uantum {M}echanics: an introduction to
  the formalism}.
\newblock Cambridge Univ Press, 2004.

\bibitem{FePus:2006}
L.~Feher and B.~G. Pusztai.
\newblock A class of {C}alogero type reductions of free motion on a simple
  {L}ie group.
\newblock Technical report, arXiv:0706.085, 2006.

\bibitem{FePus:2007b}
L.~Feher and B.~G. Pusztai.
\newblock Hamiltonian reductions of free particles under polar actions of
  compact {L}ie groups.
\newblock Technical report, arXiv:0705.1998 (math-ph), 2007.

\bibitem{FePus:2007}
L.~Feher and B.~G. Pusztai.
\newblock On the self-adjointness of certain reduced {L}aplace-{B}eltrami
  operators.
\newblock Technical report, arXiv:0707.2708v1 (math-ph), 2007.

\bibitem{Forsyth:1959}
A.~R. Forsyth.
\newblock {\em Theory of {D}ifferential {E}quations}.
\newblock Dover, New York, 1959.

\bibitem{GrabLanMarVi:1994}
J.~Grabowski, G.~Landi, G.~Marmo and G.~Vilasi.
\newblock Generalized reduction procedure: symplectic and {P}oisson formalisms.
\newblock {\em Forts. Phys}, 42:393--427, 1994.

\bibitem{GrabPon:2004b}
J.~Grabowski and N.~Poncin.
\newblock Automorphisms of quantum and classical {P}oisson algebras.
\newblock {\em Compos. Math.}, 140:511--527, 2004.

\bibitem{GrabPon:2004}
J.~Grabowski and N.~Poncin.
\newblock Lie algebraic characterization of manifolds.
\newblock {\em Cent. Eur. J. Math}, 2:811--825, 2004.

\bibitem{GraLizMarVi:2002}
J.~M. Gracia-Bondia, F.~Lizzi, G.~Marmo and P.~Vitale.
\newblock Infinitely many star products to play with.
\newblock {\em JHEP}, 4:025, 2002.

\bibitem{How:1985}
R.~Howe.
\newblock Dual pairs in physics: {H}armonic oscillators, photons, electrons and
  singletons.
\newblock {\em Lect Applied Math}, 21:179--207, 1985.

\bibitem{IbdLeMar:1997}
A.~Ibort, M.~de~Leon and G.~Marmo.
\newblock Reduction of {J}acobi manifolds.
\newblock {\em J. Phys A: Math Gen}, 30:2783--2798, 1997.

\bibitem{Kir:1999}
A.~A. Kirillov.
\newblock Merits and demerits of the orbit method.
\newblock {\em Bul AMS}, 36:433--488, 1999.

\bibitem{LanMar:1990}
G.~Landi and G.~Marmo.
\newblock Algebraic differential calculus for gauge theories.
\newblock {\em Nuclear Physics B(Proc.Suppl.)}, 18:171--206, 1990.

\bibitem{LanMarSpaVi:1991}
G.~Landi, G.~Marmo, G.~Sparano and G.~Vilasi.
\newblock A generalized reduction procedure for dynamical systems.
\newblock {\em Modern Physics Letters A}, 6:3445--3453, 1991.

\bibitem{Lands:book}
N.~P. Landsman.
\newblock {\em Mathematical topics between {C}lassical and {Q}uantum
  {M}echanics}.
\newblock Springer-Verlag, 1998.

\bibitem{LieSche:1893}
S.~Lie and G.~Scheffers.
\newblock {\em Vorlesungen {\"u}ber continuierliche {G}ruppen mit geometrischen
  und anderen {A}nwendungen}.
\newblock Teubner, Leipzig, 1893.
\newblock Edited and revised by {G}. {S}cheffers.

\bibitem{LiMarSpaVit:1993}
F.~Lizzi, G.~Marmo, G.~Sparano and P.~Vitale.
\newblock Dynamical aspects of {L}ie-{P}oisson structures.
\newblock {\em Modern Phys Lett A}, 8:2973--2987, 1993.

\bibitem{LopManMar:1997}
R.~L{\'o}pez-Pe{\~n}a, V.~I. Manko and G.~Marmo.
\newblock Wigner problem for a precessing dipole.
\newblock {\em Phys Rev A}, 56:1126--1130, 1997.


\bibitem{ManMar:1992}
V.~I. Man'ko and G.~Marmo.
\newblock Generalized {R}eduction {P}rocedure and nonlinear non stationary
  dynamical systems.
\newblock {\em Modern Physics Letters A}, 7:3411--3418, 1992.

\bibitem{Mar:1983}
G.~Marmo.
\newblock Function groups and reduction of {H}amiltonian systems.
\newblock {\em Att. Accad. Sci. Torino Cl Sci Fis Mat Nat}, 117:273--287, 1983.

\bibitem{MarParaTulcz:2005}
G.~Marmo, E.~Parasecoli and W.~Tulczyjev.
\newblock Space-time orientations and {M}axwell's equations.
\newblock {\em Rep. Math. Phys.}, 56:209--248, 2005.

\bibitem{MSSV:1985}
G.~Marmo, E.~J. Saletan, A.~Simoni and B.~Vitale.
\newblock {\em Dynamical Systems,a differential geometric approach to symmetry
  and reduction}.
\newblock John Wiley, Chicester, 1985.

\bibitem{MarViZam:2006}
G.~Marmo, P.~Vitale and A.~Zampini.
\newblock Noncommutative differential calculus for {M}oyal subalgebras.
\newblock {\em J. Geom. Phys.}, 55:611--622, 2006.

\bibitem{MoFeVecMarRu:1991}
G.~Morandi, C.~Ferrario, G.~Lo Vecchio, G.~Marmo and C.~Rubano.
\newblock The inverse problem of the calculus of variations and the geometry of
  the tangent bundle.
\newblock {\em Phys Rep}, 188(3-4):147--284, 1991.

\bibitem{OlPere:1981}
M.~A. Olshanetsky and A.~M. Perelomov.
\newblock Classical integrable finite dimensional systems related to {L}ie
  algebras.
\newblock {\em Phys Reports}, 71:313--400, 1981.

\bibitem{Palais:1957}
R.~Palais.
\newblock {\em A global formulation of the {L}ie theory of transformation
  groups}, volume~22.
\newblock Mem AMS, 1957.

\bibitem{Poin:1896}
H.~Poincar{\'e}.
\newblock Remarques sur une experience de {M}. {B}irkeland.
\newblock {\em Comp. Rend. Acad. Sciences}, 123:530--533, 1896.

\bibitem{RoSpaVi:2006}
G.~Rotondaro, G.~Sparano and G.~Vilasi.
\newblock Geometrical aspects of mixed torsionless tensor fields.
\newblock {\em Int. J. Geom. Methods Mod. Phys.}, 3:1381--1394, 2006.

\bibitem{Wein:1989}
S.~Weinberg.
\newblock Testing {Q}uantum {M}echanics.
\newblock {\em Ann Phys}, 194:336--386, 1989.

\bibitem{Woro:1987}
S.~L. Woronowicz.
\newblock Twisted ${SU}(2)$ group: an example of a noncommutative differential
  calculus.
\newblock {\em Pub. Res. Inst. Math. Sci}, 23:117--181, 1987.

\bibitem{ZhyKlau:1993}
J.~C. Zhu and J.~R. Klauder.
\newblock Classical symptoms of quantum illness.
\newblock {\em Am. J. Phys.}, 61:605--611, 1993.

\end{thebibliography}

\end{document}